\newtheorem{theorem}{Theorem}
\title{LEAP: The latent exchangeability prior for borrowing information from historical data}
\date{March 7, 2023}	
\author{ \href{https://orcid.org/0000-0002-6112-9030}{\includegraphics[scale=0.06]{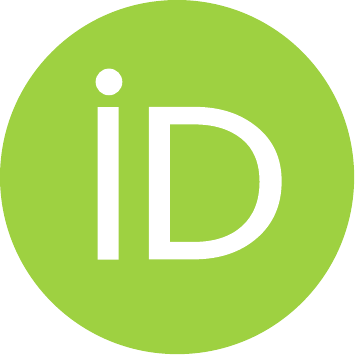}\hspace{1mm}Ethan M.~Alt} \\
	Department of Biostatistics\\
	University of North Carolina\\
	Chapel Hill, NC 27516 \\
	\texttt{ethanalt@live.unc.edu} \\
	\And
	\hspace{1mm}Xiuya Chang \\
	Department of Biostatistics\\
	University of North Carolina\\
	Chapel Hill, NC 27516 \\
	\texttt{coco.xyc@unc.edu} \\
	\And
	Xun Jiang \\
	Amgen \\
        Thousand Oaks, CA 91320 \\
	\texttt{xunj@amgen.com} \\
	\And
	Qing Liu \\
	Amgen \\
        Thousand Oaks, CA 91320 \\
	\texttt{qliu02@amgen.com} \\
	\And
	May Mo \\
	Amgen \\
        Thousand Oaks, CA 91320 \\
	\texttt{mm0@amgen.com} \\
 	\And
	H. Amy Xia \\
	Amgen \\
        Thousand Oaks, CA 91320 \\
	\texttt{mm0@amgen.com} \\
        \And
        Joseph G. Ibrahim \\
	University of North Carolina\\
	Chapel Hill, NC 27516 \\
	\texttt{ibrahim@bios.unc.edu} \\
}
\begin{document}
\maketitle

\begin{abstract}
    It is becoming increasingly popular to elicit informative priors on the basis of historical data. Popular existing priors, including the power prior, commensurate prior, and robust meta-analytic prior provide blanket discounting. Thus, if only a subset of participants in the historical data are exchangeable with the current data, these priors may not be appropriate. In order to combat this issue, propensity score (PS) approaches have been proposed. However, PS approaches are only concerned with the covariate distribution, whereas exchangeability is typically assessed with parameters pertaining to the outcome. In this paper, we introduce the latent exchangeability prior (LEAP), where observations in the historical data are classified into exchangeable and non-exchangeable groups. The LEAP discounts the historical data by identifying the most relevant subjects from the historical data. We compare our proposed approach against alternative approaches in simulations and present a case study using our proposed prior to augment a control arm in a phase 3 clinical trial in plaque psoriasis with an unbalanced randomization scheme.
\end{abstract}

\keywords{Bayesian dynamic borrowing \and External data \and Historical data, \and Clinical trials}

\section{Introduction}
\label{sec:intro}
It is becoming increasingly common to utilize statistical methods that incorporate prior information. Such prior information, such as historical data, is naturally integrated in the Bayesian paradigm, where analysts may specify a prior distribution for what evidence they possess prior to the analysis of the data. 
Bayesian methods for incorporating historical data have been used or proposed in wide a variety of statistical applications including, but not limited to, epidemiology \citep{warasi_estimating_2016}, political science \citep{isakov_towards_2020}, engineering \citep{lorencin_evaluating_2017}, spatial applications \citep{louzada_spatial_2021}, small area estimation \citep{young_using_2022}, and psychology \citep{konig_moving_2021}. 

In the areas of medical device and drug development, where historical data from clinical studies are abundant, it is of interest to incorporate historical data to improve a current trial's efficiency. For example, in pediatric studies, it can be difficult to recruit an appropriate number of patients, necessitating the use of incorporating prior information \citep{azzolina_handling_2021}. In rare diseases, prevalence is low by definition, and some have recommended Bayesian methods to borrow information from controls to reduce recruitment burden (e.g., \cite{chow_innovative_2020}). In disease areas with unmet need, the incorporation of historical data can make clinical trials more ethical by assigning fewer patients to the control arm. Conversely, in disease areas where an established standard of care exists with a wealth of historical data, conducting balanced randomization can add to the time and cost of patient recruitment, delaying the drug approval process and potentially costing lives. However, some or all subjects in the historical data might be different than the current data set, necessitating so-called dynamic borrowing techniques.

The literature on informative priors on the basis of historical data is rich. One of the earliest and most popular methods for leveraging historical data is the power prior (PP) of \cite{ibrahim_power_2000}. The PP involves raising the likelihood of the historical data to a power $a_0 \in [0, 1]$, which is sometimes referred to as the discounting parameter. The PP assumes that the parameters of the current and historical data sets are the same and allows the user to elicit a value $a_0$ to reflect the degree to which they may be different. A second option is the commensurate prior (CP) of \cite{hobbs_commensurate_2012}. Unlike the PP, the CP assumes that the parameters between the data sets could possibly be different, but that the parameters for the current data set are normally distributed with mean equal to the parameters of the historical data set. While the PP only permits blanket discounting, the CP allows parameter-specific discounting. A third popular approach is the robust meta-analytic predictive prior (RMAPP) of \cite{schmidli_robust_2014}. The RMAPP is essentially a mixture prior whose first component is an informative prior consisting of a meta-analytic predictive prior for the current data parameter and whose second component is a noninformative (vague) prior. Although the RMAPP is conceptually simple, it has only been developed for single parameter settings, limiting its applicability. Also, the RMAPP is not available in closed-form in general, so that approximations to the prior must be used in practice.

While the PP, CP, and RMAPP allow analysts to flexibly specify the amount of borrowing, they do not directly handle the case where perhaps only a subset of individuals in the historical data set are exchangeable. To this end, propensity score (PS) methods have been developed, where the probability of being enrolled in the current study is modeled as a function of baseline covariates to construct a PS. The PS serves as a measure of similarity between participants in the current and external data sets such that individuals with similar PSs are considered to be exchangeable. The end goal of PS methods is to end up with balance in baseline covariates. However, covariate balance does not ensure that individuals are exchangeable, as the outcome model could be different between the historical and current data sources, particularly when there is heterogeneity in study design.

In this paper, we introduce the latent exchangeability prior (LEAP). The LEAP is a simple, dynamic borrowing prior for historical data that is based on the assumption that possibly only a fraction of individuals in the historical data set are exchangeable with the current data set. The LEAP is a fully Bayesian characterization of exchangeability, providing an advantage over PS based approaches, for which the amount of borrowing necessarily depends on the baseline covariates of the current study. Moreover, the LEAP provides discounting at the individual level, permitting efficiency gains when a subset of patients in the historical data are exchangeable with subjects in the current data. This contrasts with the PP, CP, and RMAPP approaches, which only permit blanket discounting.

This paper proceeds as follows. In Section \ref{sec:motivation}, we motivate our prior with a placebo-controlled clinical trial with an unbalanced randomization scheme, where we wish to augment the control arm for efficiency gains. Section \ref{sec:leap_general} includes the development of the LEAP in full generality. We compare the LEAP to other approaches in Section \ref{sec:comparison}. Section \ref{sec:sims} reports results from extensive simulations for the normal linear model, comparing with other suggested priors. Section \ref{sec:analysis} provides results from data analysis using a real clinical trial. We conclude our results in Section \ref{sec:conc}.

\section{The ESTEEM Trials}
\label{sec:motivation}
Our proposed method is motivated by the ESTEEM trials for patients with moderate-to-severe plaque psoriasis. There was an unmet medical need for this patients population because many of them discontinued the conventional treatments due to safety issues and lack of tolerability. Apremilast, an oral phosphodiesterase 4 inhibitor regulates immune response associated with psoriasis, provided a novel therapeutic option for this population. 

The ESTEEM I trial \citep{papp_apremilast_2015} was the first phase 3 study evaluating the efficacy and safety of Apremilast for patients with moderate-to-severe plaque psoriasis, which was a double-blinded, placebo controlled randomized trial with 562 subjects randomized to Apremilast group and 282 subjects to placebo group. 
The ESTEEM II trial \citep{paul_efficacy_2015} was another phase 3 placebo controlled clinical study evaluating Apremilast for the same population but with smaller sample size. A total of 411 patients enrolled in the trial, and 274 subjects were randomized to Apremilast arm and 137 were assigned to placebo. We discuss the ESTEEM trials in more detail in Section \ref{sec:sims}.

Due to the unbalanced randomization, there is a potential for efficiency loss in the ESTEEM II trial. Since the ESTEEM I trial was a larger study with similar design, precision can be increased by borrowing information from the ESTEEM I study. To avoid bias and inflation of the type I error rate, only the most relevant patients from the study should be included. Thus, the goal is to improve the precision of treatment effect estimate for the ESTEEM II study by augmenting its control arm through a dynamic borrowing procedure.





\section{The latent exchangeabilty prior}
\label{sec:leap_general}
In this section, we develop the LEAP as motivated by the ESTEEM trials. We discuss prior elicitation and posterior propriety. We also introduce a concept called sample size contribution (SSC), which can be used to tune the amount of information borrowing.

\subsection{The general methodology}
Suppose that we possess historical data $D_0$ with $n_0$ observations. The LEAP assumes that the historical data come from independent observations of a mixture model with $K$ components, where the $k^{th}$ component is parameterized by $\bm{\theta}_k$. Let $\bm{\theta} = (\bm{\theta}_1', \ldots, \bm{\theta}_K')'$. In this article, we use the term ``density'' to refer generally to probability density functions (PDFs) for continuous random variables or probability mass functions (PMFs) for discrete random variables. The joint density for the historical data is given by
\begin{equation}
  f(D_0 | \bm{\theta}, \bm{\gamma}) = \prod_{i=1}^{n_0} \sum_{k=1}^K \gamma_k f(D_{0i} | \bm{\theta}_k),
  \label{eq:mixdens}
\end{equation}
where $D_{0i}$ is the $i^{th}$ observation of the historical data, $\bm{\gamma} = (\gamma_1, \ldots, \gamma_K)'$, and $\sum_{k=1}^K \gamma_k = 1$. Throughout this paper, we assume without loss of generality that the first component of the mixture in \eqref{eq:mixdens} is the same density function as the current data $D$. Thus, the parameter $\gamma_1$ quantifies the marginal probability that an individual in the historical data set is exchangeable with the current data set.

It is more convenient to work with the density in \eqref{eq:mixdens} using a latent variable representation. Specifically, let $c_{0i} \in \{1, \ldots, K\}$ denote the class to which the $i^{th}$ subject of the historical data set belongs, where we assume $c_{0i} \sim \text{Categorical}(\bm{\gamma})$ are i.i.d., i.e., the joint mass function of $\bm{c}_0 = (c_{01}, \ldots, c_{0,n_0})'$ is given by
$
  f(\bm{c}_0 | \bm{\gamma}) = \prod_{i=1}^{n_0} \prod_{k=1}^K \gamma_k^{c_{0ik}},
$
where $c_{0ik} = 1\{ c_{0i} = k \}$ is an indicator that equals 1 if subject $i$ in the historical data belongs to class $k$ and 0 otherwise. Then the joint density of $(D_0, \bm{c}_0)$ is given by
\begin{equation}
    f(D_0, \bm{c}_0 | \bm{\theta}, \bm{\gamma}) = \prod_{i=1}^{n_0} \prod_{k=1}^K \left[ \gamma_k f(D_{0i} | \bm{\theta}_k) \right]^{ c_{0ik}}.
    \label{eq:mixdens_latent}
\end{equation}
Note that when $c_{0i1} = 1$, the individual in the historical data set is exchangeable with the current data set. This motivates the namesake for the LEAP, namely, that latent variables indicate whether a given individual in the historical data set is exchangeable.

Let $\pi_0(\bm{\theta}, \bm{\gamma})$ denote a prior for $\bm{\theta}$ and $\bm{\gamma}$ that, adopting the terminology of \citet{ibrahim_power_2000} we refer to as the ``initial prior.'' The LEAP is defined as the prior induced by the density in \eqref{eq:mixdens_latent} and the initial prior. Mathematically, we may write the joint LEAP as
\begin{align}
    \pi(\bm{\theta}, \bm{\gamma}, \bm{c}_0 | D_0) &\propto f(D_0, \bm{c}_0 | \bm{\theta}, \bm{\gamma}) \pi_0(\bm{\theta}, \bm{\gamma}).
    \label{eq:jointleap}
\end{align}
The prior in \eqref{eq:jointleap} may be interpreted as the posterior density of the historical data, which is modeled as a finite mixture, with prior $\pi_0(\bm{\theta}, \bm{\gamma})$. The marginal LEAP is given by
\begin{align}
  \pi(\bm{\theta}_1 | D_0) = \int \cdots \int \sum_{\bm{c}_0 \in \{ 1, \ldots, K \}^{n_0}} \pi(\bm{\theta}, \bm{\gamma}, \bm{c}_0 | D_0) d\bm{\theta}_2 \cdots d\bm{\theta}_K d\bm{\gamma}.
  \label{eq:margleap}
\end{align}
Note that the marginal prior in \eqref{eq:margleap} marginalizes over our uncertainty about to which class each observation in the historical data belongs. The LEAP is thus a ``dynamic borrowing'' prior. Perhaps more importantly, the LEAP is applicable even if only a proportion of individuals in the historical data set are exchangeable.

Let $\pi(\bm{c}_0 | D_0) \propto \int \int \left[ \prod_{k=1}^K \gamma_k^{c_{0i}} \prod_{i=1}^{n_0} f\left( D_{0i} | \bm{\theta}_k \right) \right] d\bm{\theta} d\bm{\gamma} $ denote the marginal prior of $\bm{c}_0$ induced by the mixture model, which depends only on hyperparameters. Switching the order of the summation and integration in \eqref{eq:margleap} and normalizing, we can write the marginal LEAP as
\begin{equation}
    \pi(\bm{\theta}_1 | D_0) = \sum_{\bm{c}_0 \in \{ 1, \ldots, K \}^{n_0}} \pi(\bm{\theta}_1 | \bm{c}_0) \pi(\bm{c}_0 | D_0).
    \label{eq:leap_mixture_prior}
\end{equation}
The equation in \eqref{eq:leap_mixture_prior} indicates that the LEAP is a mixture prior consisting of $K^{n_0}$ components, with the mixture weights being the marginal prior of $\bm{c}_0$ induced by the mixture model. In Section \ref{sec:comparison}, we show that this characterization of the LEAP is closely related to robust mixture priors and Bayesian model averaging (BMA).

The joint posterior density under the LEAP is given by 
\begin{align}
    p(\bm{\theta}, \bm{\gamma}, \bm{c}_0 &| D, D_0)
       \propto \pi_0(\bm{\theta}, \bm{\gamma}) \left[ \left( \prod_{i=1}^{n} f(D_{i} | \bm{\theta}_1) \right) \left( \prod_{\{ c_{0i} = 1 \}} f(D_{0i} | \bm{\theta}_1) \right) \right] \gamma_1^{n_{01}}
       \notag \\
       &
       \times 
       \prod_{k=2}^K \left[ \gamma_k^{n_{0k}}
       \prod_{i=1}^{n_0} f(D_{0i} | \bm{\theta}_k)^{c_{0ik}}
       \right],
       \label{eq:leap_jointpost}
\end{align}
where, $\{ c_{0i} = 1 \} = \{ i \in \{1, \ldots, n_0\} : c_{0i} = 1 \}$ contains the indices for the historical data classified to be exchangeable and $n_{0k} = \sum_{i=1}^{n_0} {c_{0ik}}$ is the number of subjects in class $k \in \{1, \ldots, K\}$. We provide conditions under which the posterior in \eqref{eq:leap_jointpost} is proper in Section \ref{sec:prior_elicitation}. Note that the posterior in \eqref{eq:leap_jointpost}, for a particular value of $\bm{c}_0$, essentially pools the subjects from the current data set and those from the historical data that are classified to be exchangeable. Hence, $n_{01}$ has a nice interpretation, namely, the contribution of the historical data set to the posterior of $\bm{\theta}_1$. We refer to the number of observations contributing to $\bm{\theta}_1$ as ``sample size contribution'' (SSC), which is formally introduced in Section \ref{sec:ssc} in the context of limiting the informativeness of the LEAP.

\subsection{Prior elicitation}
\label{sec:prior_elicitation}
In this section, we discuss prior elicitation for the initial prior for the LEAP. We provide conditions for which the LEAP and the posterior density under the LEAP are proprer.

Recall that the LEAP is essentially the prior induced by the posterior of a finite mixture model for the historical data set. Thus, in the case of unbounded parameters, a proper initial prior $\pi_0(\bm{\theta})$ is required for the LEAP to be proper. This is most easily observed using the data augmentation scheme in \eqref{eq:jointleap}. If a component of the finite mixture is unoccupied (i.e., $n_{0k} = 0$ for some $k$), then the LEAP for the $k^{th}$ component is the marginal prior $\pi_0(\bm{\theta}_k)$. 

Conversely, the posterior density in \eqref{eq:leap_jointpost} may be proper even if the initial prior $\pi_0(\bm{\theta})$ is improper. We provide conditions for posterior propriety in Theorem \ref{thm:postproper}.
\begin{theorem}
    Let $L(\bm{\theta}_1 | D)$ denote the likelihood of the current data set and let prior be the LEAP given by \eqref{eq:leap_mixture_prior}. If, for every partition $\{ 1, \ldots, K \}^{n_0}$, the following conditions hold, the posterior density $p(\bm{\theta} | D, D_0)$ in \eqref{eq:leap_jointpost} is proper
    \begin{enumerate}
        \item $\int 
        \pi_0(\bm{\theta}_1)
        \left[ \prod_{i=1}^n f(D_i | \bm{\theta}_1) \right]
        \left[ \prod_{\{ i : c_{0i1} = 1 \}}^{n_0} f(D_{0i} | \bm{\theta}_1) \right]
        d\bm{\theta}_1 < \infty$, and
        \item $\int \int \gamma_1^{n_{01}}
        \prod_{k=2}^K \gamma_k^{n_{0k}} \prod_{i=1}^{n_0} f(D_{0i} | \bm{\theta}_k)^{c_{0ik}}
        \pi_0( \bm{\theta}_{(-1)}, \bm{\gamma} | \bm{\theta}_1)
        d\bm{\theta}_{(-1)}d\bm{\gamma} < \infty$ for every $\bm{\theta}_1$.
    \end{enumerate}
    \label{thm:postproper}
\end{theorem}
The first condition of Theorem \ref{thm:postproper} says that each partition of the historical data set must yield a proper posterior based on the marginal initial prior for $\bm{\theta}_1$. In general, the worst case scenario would be when all subjects in the historical data set are assigned to classes other than the first class. A simplified version of the first condition in Theorem \ref{thm:postproper} is that the posterior density is proper under initial prior $\pi_0(\bm{\theta}_1)$ without using the historical data.

The second condition of Theorem \ref{thm:postproper} essentially says that the initial prior must be proper for all components other than the first, which is analogous to the conditions for propriety in finite mixture models. This is intuitive since the likelihood of the current data does not involve the parameters of the other components. For the remainder of this paper, we assume initial priors of the form 
$
    \pi_0(\bm{\theta}, \bm{\gamma}) = \pi_0(\bm{\gamma}) \prod_{k=1}^K \pi_0(\bm{\theta}_k).
    %
$

In general, it is desirable to elicit a prior on $\bm{\gamma}$ that is noninformative over the $K$-dimensional simplex so that the data can decide how best to allocate the components of the mixture. In Section \ref{sec:asymptotic_leap}, we argue that noninformative Dirichlet priors with small concentration parameters play a fundamental role in the posterior consistency of $\bm{\gamma}$.

\subsection{Sample size contribution}
\label{sec:ssc}
We now develop a stochastic definition of SSC that is intuitive and simple. Under a conditionally conjugate (i.e., Dirichlet) prior for $\bm{\gamma}$, the prior SSC is available in closed form. Even if a conjugate prior is not used, the prior and posterior SSC is easily computable using numerical and/or Monte Carlo integration.
It is clear from the joint posterior in \eqref{eq:leap_jointpost} that, for a fixed $\bm{c}_0$, the contribution from the historical data to the posterior of $\bm{\theta}_1$ is given by $n_{01} = \sum_{i=1}^{n_0} c_{0i1}$. Thus, $n_{01}$ can be thought of as a ``sample size contribution'' (SSC). In the sequel, we show that this stochastic concept of the SSC is useful for prior elicitation.

Let $\bm{n}_0 = (n_{01}, \ldots, n_{0K})$ denote the $K$-dimensional vector giving the size of each class for a fixed value of $\bm{c}_0$. Since $\bm{c}_{0i} | \bm{\gamma} \sim \text{Categorical}(\bm{\gamma})$ are i.i.d., we have that $\bm{n}_0 | \bm{\gamma} \sim \text{Multinomial}(n_0, \bm{\gamma})$ and can write the PMF as
$
  f( \bm{n}_0 | \bm{\gamma}) = \frac{n_0!}{n_{01}! \cdots n_{0K}!} \prod_{k=1}^K \gamma_k^{n_{0k}}
$
The conditional distribution of $n_{01} | \bm{\gamma} $ is thus binomial, i.e.,
$
  f(n_{01} | \bm{\gamma}) = \binom{n_0}{n_{01}} \gamma_1^{n_{01}} \left( 1 - \gamma_1 \right)^{n_0 - n_{01}}.
$
Let $\pi_0(\gamma_1) = \int \cdots \int \pi_0(\bm{\gamma}) d\gamma_2 \cdots d\gamma_K$ denote the marginal prior for $\gamma_1$. Then the marginal prior for the SSC is given by
\begin{equation}
   \pi_0(n_{01}) \propto \gamma_1^{n_{01}}(1 - \gamma_1)^{n_0 - n_{01}} \pi_0(\gamma_1).
    \label{eq:effective_size_prior}
\end{equation}

As a special case, if $\pi_0(\gamma_1)$ is a beta prior with shape parameters $\delta_{01}, \delta_{02}$, such as when $\pi_0(\bm{\gamma})$ is a Dirichlet prior (in which case $\delta_{01} = \alpha_1$ and $\delta_{02} = \sum_{k=2}^K \alpha_{0k}$), then we have
$
  \pi_0(n_{01}) = \binom{n_0}{n_{01}} \frac{ B(n_{01} + \delta_{01}, n - n_{01} + \delta_{02}) }{B(\delta_{01}, \delta_{02})},
  %
$
i.e., $n_{01}$ is a beta-binomial random variable.

This characterization of the prior SSC facilitates prior elicitation. For example, suppose $K = 2$ and the analyst, after studying the historical data and considering the inclusion-exclusion criteria of the current study, believes with $95\%$ probability that the true number of exchangeable individuals in the historical data set is between $n_{01}^{\text{low}}$ and $n_{01}^{\text{high}}$. 
Then, under a beta prior for $\gamma_1$, the analyst may use the quantile function of the beta-binomial distribution to find values of hyperparameters $\alpha$ and $\delta$ that reflect this belief. Under a non-conjugate prior for $\gamma_1$, then the PMF in \eqref{eq:effective_size_prior} is easily obtainable using numerical integration and it is straightforward to find hyperparameters that satisfy such criteria.


\subsection{Larger historical data sets}
Occasionally, the historical data sample size $n_0$ exceeds the current data sample size $n$. For example, the ESTEEM I trial is much larger than the ESTEEM II trial. It would be inappropriate to utilize the LEAP as currently developed since it is feasible for the historical data to dominate the posterior. To combat this, we derive a new density over the $K$-dimensional simplex where one of the components is truncated.


Consider the density
\begin{align}
  f(\bm{\gamma} | \bm{\alpha}, a, b) &= \frac{1}{C(\bm{\alpha}, a, b)} 1\{ a < \gamma_1 < b \} \prod_{k=1}^K \gamma_k^{\alpha_k - 1}, \bm{\gamma} \in \Omega,
  \label{eq:ptdd}
\end{align}
where $0 \le a < b \le 1$, $\Omega = \{ \bm{\gamma} \in [0,1]^K : \sum_{k=1}^K \gamma_k = 1, a < \gamma_1 < b \}$, and $C(\bm{\alpha}, a, b) = \int_{\Omega} 1\{ a < \gamma_1 < b \} \prod_{k=1}^K \gamma_k^{\alpha_k - 1} d\bm{\gamma}$ is a normalizing constant. We call the distribution corresponding to the density in \eqref{eq:ptdd} a \emph{partially truncated Dirichlet} (PTD) distribution with concentration parameter $\bm{\alpha}$ and truncation parameters $a$ and $b$, and we write
$
  \bm{\gamma} \sim \text{PTD}\left( \bm{\alpha}, a, b \right).
$

Let $\bm{\gamma} \sim \text{PTD}(\bm{\alpha}, a, b)$ and decompose $\bm{\gamma} = (\gamma_1, \bm{\gamma}_2)'$. Let $\alpha_0 = \sum_{k=2}^K \alpha_k$. The following properties of the PTD distribution hold:
\begin{enumerate}
    \item $\gamma_1 \sim \text{Truncated-Beta}(\alpha_1, \alpha_0, a, b)$.
    \item $\bm{\gamma}_2 | \gamma_1 \sim (1 - \gamma_1) \times \text{Dirichlet}(\bm{\alpha}_2)$.
    \item If $\bm{x} | \bm{\gamma} \sim \text{Multinomial}(\bm{n}, \bm{\gamma})$ where $\bm{n} = (n_1, \ldots, n_K)$ is known, then 
    $
      \bm{\gamma} | \bm{x} \sim \text{PTD}\left( \bm{n} + \bm{\alpha}, a, b \right).
    $
\end{enumerate}
We formally prove statements (1)-(3) in Section 3 of the Supplementary Appendix. Since the marginal prior on $\gamma_1$ is a truncated beta prior, we may elicit $a = 0$ and $b = \min\{n/n_0, 1\}$ to guarantee that the historical data do not dominate the posterior. Using (1), (2), and (3), it is straightforward to sample from the PTD density in a Gibbs sampling algorithm.

\subsection{Asymptotic properties of the LEAP}
\label{sec:asymptotic_leap}
In this section, we discuss the LEAP's asymptotic properties. The asymptotic distribution is crucial to understanding the borrowing properties of the LEAP. We argue that the LEAP can asymptotically pool the current and historical data sets under full exchangeability while being robust to prior-data conflict.

Suppose that the true number of components in the historical data is $K_0 \ge 1$, but we elicit $K > K_0$. Suppose without loss of generality that we may decompose $\bm{\gamma} = (\gamma_1, \ldots, \gamma_{K_0}, \gamma_{K_0 + 1}, \ldots, \gamma_{K})'$. Under some mild regularity conditions, (e.g., the prior for $\bm{\gamma}$ being noninformative relative to the dimension of the $\bm{\theta}_k's$), 
\cite{rousseau_asymptotic_2011} show that 
$
  \sum_{k=K_0 + 1}^{K} E\left[ \gamma_k | D_0 \right] \to 0 \text{ as } n_0 \to \infty.
$
Hence, extra components will, asymptotically, be given no weight in the resulting posterior density of the historical data. For mixtures of location-scale families, \cite{rousseau_asymptotic_2011} recommend eliciting $\bm{\gamma} \sim \text{Dirichlet}(\bm{\alpha})$ with $\alpha_j < 1$ for every $j \in \{1, \ldots, K\}$.

For example, suppose that the current data set and historical data sets are completely exchangeable. Then we have $\gamma_1 = 1$ and $\gamma_2 = \ldots = \gamma_K = 0$. It follows that
$
  \pi(\bm{\theta} | D_0) = \int \pi(\bm{\theta}, \bm{\gamma} | D_0) d\bm{\gamma} \propto L(\bm{\theta}_1 | D_0) \pi_0(\bm{\theta}_1) \pi_0(\bm{\theta}_{(-1)}) \text{ as } n_0 \to \infty,
$
Thus, if the prior for the mixture weights is noninformative, the LEAP asymptotically pools the current and historical data.

Conversely, if no one in the historical data set is exchangeable with the current data set, then $\gamma_1 = 0$. By an analogous argument, under this case, the marginal prior for $\bm{\theta}_1$ under the LEAP will converge to the initial prior for $\bm{\theta}_1$. That is,
$
  \pi(\bm{\theta}_1 | D_0) \approx \pi_0(\bm{\theta}_1) \text{ as } n_0 \to \infty.
$

The facts that the data are pooled under full exchangeability and the historical data are given no weight under a lack of exchangeability illustrate the potential versatility and robustness of the LEAP. Namely, type I error rates should be relatively low when using the LEAP, while efficiency gains can be made even if only a subset of the historical data is exchangeable. This conjecture is verified via simulation studies presented in Section \ref{sec:sims}.

\section{Comparison to other methods}
\label{sec:comparison}
In this section, we compare the LEAP with Bayesian model averaging, the power prior, and PS approaches. We discuss situations under which the LEAP can be competitive in comparison to the other priors.
\vspace{-0.4cm}



\subsection{Relationship with Bayesian model averaging}
\label{sec:bma}
Bayesian model averaging (BMA) is considered a gold standard in model uncertainty. In this application, our uncertainty is regarding who from the historical data set is exchangeable with the current data set. It is an ideal, then, that we would set a prior probability for each partition, pool the current data and the historical data assigned to the first class, and average over posterior results. In this section, we show that the LEAP induces a prior probability on the space of partitions, and develop the notion of a posterior partition probability.

Note that we may write the joint LEAP as $\pi(\bm{\theta}) = \sum_{\{1, \ldots, K\}^{n_0}} \pi(\bm{\theta}| \bm{c}_0, D_0) \pi(\bm{c}_0 | D_0)$, where
\begin{align}
  \pi(\bm{\theta} | \bm{c}_0, D_0) =\frac{ 
    \int \pi_0(\bm{\theta}, \bm{\gamma})\prod_{k=1}^K \gamma_k^{n_{0k}} \prod_{i=1}^{n_0} f(D_{0i} | \bm{\theta}_k)^{c_{0ik}}
    d\bm{\gamma}
  }
  {
    \int \int \pi_0(\bm{\theta}, \bm{\gamma})  \prod_{k=1}^K \gamma_k^{n_{0k}} \prod_{i=1}^{n_0} f(D_{0i} | \bm{\theta}_k)^{c_{0ik}}d\bm{\gamma} d\bm{\theta}
  }
  \label{eq:prior_theta_given_c0}
\end{align}
is the conditional prior for $\bm{\theta}$ based on a particular classification $\bm{c}_0$ and
\begin{align}
\pi(\bm{c}_0 | D_0) = 
  \frac
  {\int \int \pi_0(\bm{\theta}, \bm{\gamma}) \prod_{k=1}^K \prod_{i=1}^{n_0}f(D_{0i} | \bm{\theta}_k)^{c_{0ik}} d\bm{\theta} d\bm{\gamma} }  
  {\sum_{\{1, \ldots, K\}^{n_0}} \int \int \pi_0(\bm{\theta}, \bm{\gamma}) \prod_{k=1}^K \prod_{i=1}^{n_0}f(D_{0i} | \bm{\theta}_k)^{c_{0ik}} d\bm{\theta} d\bm{\gamma} }
  \label{eq:marg_prior_c0}
\end{align}
is the marginal PMF of $\bm{c}_0$ induced by the historical data and initial priors. The denominator in \eqref{eq:marg_prior_c0} may be interpreted as a marginal likelihood for the mixture model of the historical data, and the numerator may be interpreted as a ``partitional marginal likelihood,'' i.e., the marginal likelihood based on a particular partition of $D_0$.

Let the ``partition space'', defined as set of all possible partitions of $\{ 1, \ldots, K \}^{n_0}$, be denoted by $\mathcal{D}_0 = \{ \mathcal{D}_{0l}, l = 1, \ldots, L \}$ where $\mathcal{D}_{0l} = \{ D_{0l1}, \ldots, D_{0lK} \}$ is the $l^{th}$ data partition, $D_{0lk} = \{ i \in \{1, \ldots, n_0\} : c_{0ik} = 1 \}$ is the set of indices in $\mathcal{D}_{0l}$ belonging to class $k$, and $L = K^{n_0}$. 
Let $\pi(\mathcal{D}_{0l})$ denote the probability from the PMF $\pi(\bm{c}_0 | D_0)$ corresponding with the $l^{th}$ partition and let $\pi(\bm{\theta} | \mathcal{D}_{0l})$ be the prior in \eqref{eq:prior_theta_given_c0} corresponding to partition $\mathcal{D}_{0l}$. We refer to the $\pi(\mathcal{D}_{0l})$'s as ``prior partition probabilities.'' The prior for $\bm{\theta}$ may then be expressed as
\begin{align}
  \pi(\bm{\theta} | D_0) \propto \sum_{l=1}^{K^{n_0}}  \left[ \prod_{k=1}^K L(\bm{\theta}_k | D_{0lk}) \right] \pi_0(\bm{\theta}) \pi(\mathcal{D}_{0l}) 
  ,
  \label{eq:leap_bma}
\end{align}
where $L(\bm{\theta}_k | D_{0lk})$ is the likelihood function of $\bm{\theta}_k$ based on data $D_{0lk}$. Note that the prior in \eqref{eq:leap_bma} is equivalent to fitting the posterior for all $K^{n_0}$ data partitions and averaging over the prior partition probabilities, which is conceptually equivalent to BMA. 

Using the LEAP in \eqref{eq:leap_bma}, the posterior density for $\bm{\theta}_1$ is given by \\
$
    p(\bm{\theta} | D, D_0) 
    = \sum_{l=1}^{K^{n_0}} p(\bm{\theta} | D, \mathcal{D}_{0l}) p(\mathcal{D}_{0l} | D, D_0),
$
where 
$
    p(\bm{\theta} | D, D_{0l}) 
    = \frac{ 
        \pi_0(\bm{\theta}) L(\bm{\theta}_1 | D, D_{0l1}) \prod_{k=2}^K L(\bm{\theta}_k | D_{0lk})
    }{
    \int \pi_0(\bm{\theta}) L(\bm{\theta}_1 | D, D_{0l1}) \prod_{k=2}^K L(\bm{\theta}_k | D_{0lk}) d\bm{\theta}
    }
$
is the posterior probability of $\bm{\theta}$ based on historical data partition $\mathcal{D}_{0l}$ and where $p(\mathcal{D}_{0l} | D, D_0) = \int p(\bm{\theta} | D, \mathcal{D}_{0l}) d\bm{\theta}$ is the posterior partition probability (marginal likelihood) of the $l^{th}$ historical data partition, which mimics the posterior model probability in BMA. Hence, the posterior distribution of the LEAP may be interpreted as a ``partition averaged'' posterior density, i.e., the posterior distribution is a weighted average of partitions of the historical data set. The partitions are weighted based on its prior and how well the partition fits the current data (i.e., the ``posterior partition probabilities''). 

For example, consider an i.i.d. Poisson model with $K = 2$ components for the LEAP. Let the initial prior for $\bm{\theta} = (\theta_1, \theta_2)$ be a product of Gamma$(\eta_{0k}, \beta_{0k})$ densities, $k = 1, 2$ and that for $\bm{\gamma} =(\gamma_1, \gamma_2)$ be a Dirichlet$(\bm{\alpha}_0)$ density, where $\bm{\alpha}_0 = (\alpha_{01}, \alpha_{02})'$. Let $n_{0k} = \sum_{i=1}^{n_0} c_{0ik}$ denote the number of subjects assigned to class $k$ and let $\bar{y}_{0k}$ denote the sample mean for the particular partition, where we take $\bar{y}_{0k} = 0$ if $n_{0k} = 0$. It can be shown that the posterior partition PMF is given by
$
  p(\bm{c}_0 | D, D_0) \propto B(\bm{n}_0 + \bm{\alpha}_0) \prod_{k=1}^K \Gamma(\eta_k) \beta_k^{-\eta_k},
$
where $B(\cdot)$ is the multivariate beta function, $\eta_1 = n\bar{y} + n_{01} \bar{y}_{01} + \eta_{01}$, $\beta_1 = n + n_{01} + \beta_{01}$, $\eta_2 = \eta_{02}$ and $\beta_2 = \beta_{02}$. Table~\ref{tab:partitionprobs} shows the prior and posterior partition probabilities and means for each partition assuming $\bar{y} = 1.5$, $n = 10$, $\bm{y}_0 = (1, 2, 6)$, $\bm{\alpha}_0 = (0.9, 0.9)$, and $\theta_k \sim \text{Gamma}(0.1, 0.1)$. 
\begin{table}
    \centering
    \begin{tabular}{lcccc}
        \toprule
        $(c_{01}, c_{02}, c_{03})$  & $E[\theta_1 | \bm{y}_0, \bm{c}_0]$ & $\pi(\bm{c}_0 | \bm{y}_0)$ & $E[\theta_1 | \bm{y}, \bm{y}_0, \bm{c}_0]$ & \multicolumn{1}{c}{$p(\bm{c}_0 | \bm{y}, \bm{y}_0)$} \\ 
        \midrule
        $(1, 1, 1)$  & $2.94$ & $0.319$ & $1.84$ & $0.412$ \\
        $(2, 2, 2)$  & $1.00$ & $0.319$ & $1.50$ & $0.108$ \\
        $(1, 1, 2)$  & $1.48$ & $0.092$ & $1.50$ & $0.259$ \\
        $(2, 2, 1)$  & $5.55$ & $0.092$ & $1.90$ & $0.017$ \\
        $(1, 2, 1)$  & $3.38$ & $0.020$ & $1.83$ & $0.019$ \\
        $(2, 1, 2)$  & $1.91$ & $0.020$ & $1.54$ & $0.045$ \\
        $(1, 2, 2)$  & $1.00$ & $0.068$ & $1.45$ & $0.105$ \\
        $(2, 1, 1)$  & $3.86$ & $0.068$ & $1.91$ & $0.035$ \\
        \bottomrule 
    \end{tabular}
    \caption{Prior and posterior means and partition probabilities for a Poisson model with $K = 2$, $n = 10$, $\bar{y} = 1.5$, $\bm{y}_0 = (1, 2, 6)$, $\gamma_1 \sim \text{Beta}(0.9, 0.9)$, and $\theta_k \sim \text{Gamma}(0.1, 0.1)$.}
    \label{tab:partitionprobs}
\end{table}

Recall that the prior PMFs are induced by the mixture model of the historical data set. Since $y_{01}$ is closer to $y_{02}$ than $y_{03}$, for a fixed value of $\bm{n}_0$, the prior PMFs that cluster the first and second observations together are given higher weight than those that cluster the first and third or second and third observations together. Thus, the LEAP has the attractive feature that, for a fixed value of $\bm{n}_0$, historical data observations that fit locally to a parametric distribution are given higher prior weight. Table \ref{tab:partitionprobs} also shows that the posterior partition probabilities are influenced by the observe data via a Bayesian update. For example, since $y_{01}, y_{02}$ are closer to the mean of the observed data, higher posterior weight is given for $(1, 1, 2)$ than for $(2, 1, 1)$ even though the prior partition probabilities are equal, illustrating the LEAP's dynamic borrowing properties.

The overall posterior mean is given as $E(\theta_1 | \bm{y}, \bm{y}_0) = \sum_{l=1}^8 p(\bm{c}_{0l} | \bm{y}, \bm{y}_0) E(\theta_1 | \bm{y}, \bm{y}_0, \bm{c}_{0l}) \approx 1.66$, which is identical to the estimated posterior mean using MCMC. An attractive feature of the MCMC approach is that it is computationally feasible in higher dimensions and does not spend time sampling from partitions with low posterior probability.

In Section 4 of the Supplementary Appendix, we provide analytical results of the LEAP for the normal linear model. We show that prior partition probabilities are higher for partitions yielding lower mean squared errors, agreeing with the Poisson example. Furthermore, we show that, all else equal, the posterior model probabilities decrease in $\lVert \hat{\bm{\beta}}_1 - \hat{\bm{\beta}}_{01} \rVert$, where $\hat{\bm{\beta}}_1$ is the MLE of the current data set and $\hat{\bm{\beta}}_{01}$ is the MLE for a particular partition of the historical data set assigned to the first component. Thus, posterior partition probabilities will be larger for (and, hence, more posterior weight will be given to) partitions of the historical data set whose MLE is similar to that of the current data set.

We note that the partition averaging representation is closely related to robust mixture priors, which take the form $\pi(\bm{\theta}_1) = p_0 \pi_I(\bm{\theta}_1) + (1 - p_0) \pi_{V}(\bm{\theta}_1)$, where $p_0 \in (0, 1)$, $\pi_I(\cdot)$ is an informative prior density, and $\pi_V(\cdot)$ is a vague (i.e., noninformative) prior density. For the LEAP, partitions yielding small values of $n_{01}$ will be noninformative priors (e.g., if $n_{01} = 0$, the prior for $\bm{\theta}_1$ is simply the initial prior $\pi_0(\bm{\theta}_1)$).

\subsection{Comparison to power priors}
\label{sec:powerprior}
The power prior (PP) is a popular prior for historical data sets. The PP is simple, and its interpretation as a discounted likelihood is attractive. Given a historical data set $D_0$, the PP is given by
$
    \pi_{\text{PP}}(\bm{\theta}_1 | D_0, a_0) = \frac{L(\bm{\theta}_1 | D_0)^{a_0} \pi_0(\bm{\theta}_1)}{C(a_0)},
$
where $\pi_0(\bm{\theta})$ is an initial prior for the parameters (which is usually taken to be noninformative and, in some cases, can be improper), $a_0 \in [0,1]$ is a discounting parameter, and $C(a_0) = \int L(\bm{\theta}_1 | D_0)^{a_0} \pi_0(\bm{\theta}_1) d\bm{\theta}_1$ is a normalizing constant. 

Note that if $\gamma_1 = a_0 \in \{0, 1\}$ is fixed, then the LEAP and the PP are equivalent. A fundamental difference between the PP and the LEAP is that the PP conducts blanket discounting on \emph{all} individuals in the historical data set. By contrast, the LEAP searches for the most relevant individuals in the historical data set pertaining to the outcome model, so that the most relevant individuals have a higher weight on the posterior density. Thus, in settings where the exchangeability assumption holds, the PP may be more appropriate. If exchangeability does not hold, then the PP can lead to large bias and type I error rates.

While the LEAP is a dynamic prior, the PP is not when $a_0$ is taken as fixed. To solve this issue, the normalized power prior (NPP) has been proposed \citep{duan_evaluating_2006}. The NPP extends the PP by putting a prior on $a_0$. Specifically, the NPP is given by 
$
  \pi_{\text{NPP}}(\bm{\theta}_1, a_0 | D_0) = \pi_{\text{PP}}(\bm{\theta}_1 | D_0, a_0) \pi(a_0),
  \label{eq:npp}
$
where $\pi(a_0)$ is a prior on $a_0 \in [0,1]$. The NPP avoids having to specify a single value for $a_0$ and allows the value of $a_0$ to be data driven. \cite{hobbs_commensurate_2012} point out that the posterior of $a_0$ under the NPP closely resembles the prior. Thus, even under full exchangeability and a large sample, the posterior mean of $a_0$ will not approach $1$ under a flat prior for $a_0$. However, the posterior mean of $a_0$ will typically approach 0 for flat priors on $a_0$ under incompatibility between the current and historical data sets. If a proportion of individuals in the historical data set are exchangeable, it is unclear how the NPP will behave, but these limiting cases shed some light. If the posterior of $a_0$ is close to zero, then bias (in terms of the posterior expectation of the parameters) will be minimized but efficiency gains are limited. 
Conversely, if the posterior of $a_0$ resembles the prior, then point estimates under the NPP may have substantial bias.

By contrast, we argued in Section \ref{sec:asymptotic_leap} that when the samples size for the historical data is large enough, the LEAP has the effect of pooling the historical data and current data sets together when the data sets are completely exchangeable, while being completely uninformative asymptotically if there is complete incompatibility between the two data sets. We numerically examine this property of the LEAP via simulation in Section \ref{sec:sims}.

In general, there is a computational and implementation advantage for the LEAP over the NPP. Namely, the NPP requires the evaluation of $C(a_0)$ in every iteration of the MCMC scheme in general, and $C(a_0)$ is only analytically tractable under certain special cases (e.g., a normal linear model with a conjugate prior). By contrast, the LEAP simply specifies a mixture model for the historical data set, allowing for non-conjugate priors and avoiding having to compute a normalizing constant under those contexts. In general, it is quite easy to implement the LEAP for any MCMC software that supports mixture models.

\subsection{Comparison to propensity score approaches}
\label{sec:ps_integrated}
More recently, so-called ``propensity score integrated'' priors have been developed, which expand upon existing priors under the assumption that balance in baseline characteristics yields exchangeability. Specifically, let $D_0 = \{ (y_{0i}, z_{0i}, \bm{x}_{0i}), i = 1, \ldots, n_0 \}$ denote the historical data and let $D = \{ (y_i, z_i, \bm{x}_i), i = 1, \ldots, n \}$ denote the current data, where, for the historical and current data sets, respectively, $y_{0i}$ and $y_i$ are responses, 
$z_{0i}$ and $z_i$ are treatment indicators (equaling 1 if the individual was treated; 0 if the individual received control), 
and $\bm{x}_{0i}$ and $\bm{x}_i$ are vectors of covariates, each of which may include an intercept term.

PS integrated approaches are two-step approaches. The first step, which we refer to as the ``design stage,'' involves estimating the probability of being in the current study as a function of covariates (i.e., the PS). Note that this PS is different than that in the causal inference literature, which models the probability of receiving treatment as a function of covariates. Let $s_i = 1$ for $i = 1, \ldots, n$ and let $s_{0i} = 0$ for all $i = 1, \ldots, n_0$ denote the indicator to which study each individual belongs. Let $e_i = \widehat{\Pr}(s_i = 1 | \bm{x}_{i})$ and $e_{0i} = \widehat{\Pr}(s_{0i} = 1 | \bm{x}_{0i})$ denote the estimated PSs for the current and historical data sets, respectively, that are obtained from a logistic regression model after pooling the two data sets. Then, PS approaches (e.g., matching, weighting, stratification) are conducted in attempt to balance the baseline covariates. 

The second step, which we call the ``analysis stage,'' is to analyze the data obtained from step 1 (excluding the covariates). The underlying assumption is that the PS approach in the design stage balances baseline covariates, so that there is nothing further to control for.

For example, \cite{lu_propensity_2022} propose the PS integrated power prior (PSIPP), where observations are stratified by the PS in the design stage and strata-specific PPs are elicited in the analysis stage. Their prior may be represented as
$
    \pi_{\text{PSIPP}}( \bm{\mu} | \bm{a}_0) = \prod_{z=0}^1 \prod_{j=1}^J L(\mu_{zj} | D_{0zj})^{a_{0j}} \\ \times \pi_0(\mu_{zj}),
    \label{eq:psipp}
$
where $\mu_{zj}$ is the mean for individuals receiving treatment $z$ in the $j^{th}$ stratum, $D_{0zj}$ contains the historical data assigned to treatment $z$ and the $j^{th}$ stratum, $a_{0j}$ is a stratum-specific discounting parameter, and $\pi_0(\cdot)$ is an initial prior. A treatment effect, $\Delta$, is then obtained via an average of the stratum-specific treatment effects, i.e.,
$
  \Delta = \frac{1}{J} \sum_{j=1}^J (\mu_{1j} - \mu_{0j}).
$

There are several disadvantages with PS integrated approaches. First, these approaches are not fully Bayesian. Note that the PS is estimated, but it does not take into account uncertainty surrounding the parameter values of the PS. As a result, each individual is placed into a single stratum during the analysis stage, and uncertainty regarding exchangeability is not incorporated in these approaches. By contrast, in Section \ref{sec:bma}, we showed that the LEAP can be conceptualized as a mixture prior over partitions of exchangeable individuals, directly incorporating this uncertainty.

While it may seem desirable to incorporate a fully Bayesian approach, allowing the PS and outcome parameters to be jointly estimated, it has been shown that these approaches generally lead to increased bias in the Bayesian causal inference literature \citep{zigler_model_2013}. A quasi-Bayesian approach that cuts the dependence from the PS to the outcome model could alternatively be used, but that approach is also not fully Bayesian.

Perhaps most critically, these approaches implicitly assume that individuals are exchangeable if and only if their covariates are similar. In most regression settings, we condition on covariates, e.g., we model $E(Y | \bm{X} = \bm{x})$. When adjusting for covariates, we do not require covariate balance for inference to be valid. If there is a discrepancy between the covariate distributions of the historical and current data set but the distribution of the outcome conditional on the covariates is the same, PS approaches can be inefficient since exchangeable individuals may be discarded.

\section{Simulations}
\label{sec:sims}
In this section, we present extensive simulation results comparing the LEAP with several competing priors, including a normalized version of the partial borrowing power prior \citep{ibrahim_bayesian_2015}, which we refer to as the NPBPP, the PS integrated power prior (PSIPP), and a reference prior. The simulations are based off of the ESTEEM I and ESTEEM II trials described in Section \ref{sec:motivation}. To compare the borrowing properties of the priors, we generate current data sets that are at least as large as the historical data sets. 

\subsection{Simulation setup}
For the ESTEEM trials, the primary outcome measure was the percentage of participants who achieved an improvement of at least 75 percent in the Psoriasis Area Severity Index (PASI) at week 16 from baseline. A PASI score can range from 0 to 72 with higher scores indicating more severe psoriasis. We consider the percent reduction in the PASI score as the outcome of interest as opposed to the dichotomization of it.

Let $D = \{ (y_i, z_i, \bm{x}_i), i = 1, \ldots, n \}$ denote the current data, where $z_i \in \{0, 1\}$ is a treatment indicator ($z_i = 1$ if subject $i$ received treatment; $z_i = 0$ if subject $i$ received control) and $\bm{x}_i$ is a $p$-dimensional vector of covariates, which may contain an intercept term. We generate the data from a linear regression model of the form
$
  \bm{y} = \bm{X}\bm{\beta}_{11} + \bm{z} \beta_{12} + \bm{\epsilon},
$
where $\bm{\beta}_{11} = (-18.00, 0.49, -1.67, 1.99)'$ is a vector of regression coefficients associated with the intercept and the centered and scaled covariates (age, age$^2$, and baseline PASI score), $\beta_{12} = -35.39$ is the treatment effect, and $\bm{\epsilon} \sim N_n(0, \tau_1^{-1} \bm{I}_n)$ is an error term. Let the historical control data be denoted by $D_0 = \{ (y_{0i}, \bm{x}_{0i}), i = 1, \ldots, n_{00} \}$, where $y_{0i}$ is the percent change in PASI score for subject $i$, $\bm{x}_{0i}$ is a $p$-dimensional vector of covariates, and $n_{00}$ denotes the number of historical controls. The treatment indicators for the current and historical data sets are generated via $z_i, z_{0j} \sim \text{Bernoulli}(2/3)$ for $i = 1, \ldots, n$, $j = 1, \ldots, n_0$. 
We consider current data sample sizes as $n = n_0 + h$ for $h \in \{ 0, 50, 100, 150 \}$. Note that we omit the treatment indicator since each subject in the historical data receives control. The logarithms of age and baseline PASI were generated from a multivariate normal distribution using the sample mean and covariance matrix from the ESTEEM II study, given by
$
\bm{\mu} = (3.78, 2.90)'
$
with variances and covariance given respectively by $\sigma_1^2 = 0.09$, $\sigma_2^2 = 0.10$, and $\sigma_{12} = -0.01$.

We consider several cases of exchangeability for our simulation, which we refer to fully exchangeable (i.e., the historical data participants have the same outcome and covariate parameters), half exchangeable (i.e., half of the historical data participants have the same outcome and covariate parameters), and fully unexchangeable (i.e., the historical data participants have different outcome and covariate parameters). We assume all individuals in the current data set are exchangeable, and any sources of heterogeneity arise only in the historical data sets. We let $\bm{\beta}$ and $\bm{\mu}$, defined above, denote the regression coefficients and covariate parameters for the exchangeable group.

For the half exchangeable and fully unexchangeable settings, we define a parameter $q \in \{0.25, 0.50, 0.75\}$ that enters as a scalar multiple, namely, $\bm{\beta}_{\text{unexch}} = q \bm{\beta}$ and $\bm{\mu}_{\text{unexch}} = q \bm{\mu}$, where $\bm{\beta}_{\text{unexch}}$ and $\bm{\mu}_{\text{unexch}}$ are the regression coefficients and the mean for the continuous covariates, respectively, for the unexchangeable group. Taking scalar multiples to both the outcome and covariate parameters puts the LEAP and PSIPP on a level playing field. A total of 20,000 posterior samples were obtained after a burn-in period of 2,000 via the Stan programming language \cite{carpenter_stan_2017}.

\subsection{Prior elicitation for the simulation study}
\label{sec:sims_prior_elicit}
The NPBPP is given by
$
    \pi_{\text{NPBPP}}(\bm{\beta}_{11}, \beta_{12}, \tau_1, a_0 | D_0) = \frac{1}{C(a_0)}
    L(\bm{\beta}_{11}, \tau_1 | D_0)^{a_0} \pi_0(\bm{\beta}_{11}, \beta_{12}, \tau_{1}) \pi(a_0),
    %
$
where $C(a_0)$ is a normalizing constant and $\pi_0(\cdot)$ is an initial prior. We assume the initial prior takes the form
$
  \pi_0(\bm{\beta}_{11}, \beta_{12}, \tau_1) \propto \tau_1^{\delta_0 / 2 - 1} \exp\left\{ -\frac{\tau_1}{2} \nu_0 \right\} \exp\left\{ -\frac{\xi_0}{2}(\beta_{12} - \mu_{0})^2 \right\},
$
where $\nu_0$, $\delta_0$, $\xi_0$, and $\mu_{0}$ are elicited hyperparameters. This choice of initial prior allows the normalizing constant $C(a_0)$ for the NPBPP to be analytically tractable \citep{ibrahim_power_2015}, which facilitates MCMC sampling. We elicited $\delta_0 = 0.02$, $\nu_0 = 0.02$, $\xi_0 = 10^{-2}$, and $a_0 \sim U(0, 1)$. 

A second comparator is PSIPP, which is described in detail in Section \ref{sec:ps_integrated}. For our simulations, we stratified by quintiles of the PS. While this selection is somewhat arbitrary, in the causal inference literature, \cite{rosenbaum_reducing_1984} note that stratification based on quintiles can reduce up to 90\% of the bias due to confounding. As suggested by \cite{lu_propensity_2022}, the elicitation of $a_{0j}$ depends on the degree of overlap in the PS distributions of the strata. Stratification was implemented using the \texttt{psrwe} R package \citep{wang_psrwe_2022}.

Finally, we compare the posterior under the LEAP with a reference prior, where we assume $\bm{\beta}_{11} \sim N_p(0, 10^2 \bm{I}_p)$ and $\sigma_1 = \tau_1^{-1/2} \sim N^{+}(0, 10^2)$, where $N^{+}(a, b)$ denotes the positive half-normal distribution with mean $a$ and variance $b$. We may write this prior as
$
    \pi_{\text{ref}}(\bm{\beta}_1, \sigma_1) = 2\phi_p(\bm{\beta}_1 | 0, 10^2) \phi(\sigma_1 | 0, 10^2) 1\{\sigma_1 > 0\}.
    %
$
In order to showcase the broad computational feasibility of the LEAP, we do not use conjugate priors or Gibbs sampling. Instead, we take the initial prior in \eqref{eq:jointleap} as
$
    \pi_0(\bm{\beta}, \bm{\sigma}) = \prod_{k=1}^K \pi_{\text{ref}}(\bm{\beta}_k, \sigma_k),
    %
$
where we set $K = 2$ and $\gamma_1 \sim \text{Beta}(0.95,0.95)$.
\vspace{-0.25cm}

\subsection{Simulation results}
For each simulation scenario, we compute percent absolute bias (PAB), mean squared error (MSE), and 95\% symmetric credible interval (CI) coverage. The PAB is computed as $\text{PAB} = \frac{1}{10,000} \sum_{m=1}^{10,000} \left\lvert   \frac{ \hat{\beta}_{12}^{(m)} - \beta_{12} } { \beta_{12} } \right\rvert$, where $\hat{\beta}_{12}^{(m)}$ is the posterior mean of the treatment effect for the $m^{th}$ data set. The MSE is computed as $\text{MSE} = \frac{1}{10,000} \sum_{m=1}^{10,000} \left( \beta_{12}^{(m)} - \beta_{12} \right)^2 $. The CI coverage is computed as the proportion of the samples with CI intervals containing $\beta_{12}$.

The results of the simulation are presented in Figure~\ref{fig:sims}. The three rows of Figure~\ref{fig:sims} correspond to the fully exchangeable, half exchangeable, and fully unexchangeable settings, respectively. The three columns of the figure correspond to PAB, MSE, and CI coverage respectively, which are measured on the vertical axis. The different priors are represented as colors and the different values of $q$ are represented as line types.

\begin{figure}[ht]
    \begin{center}
        \includegraphics[width=\textwidth]{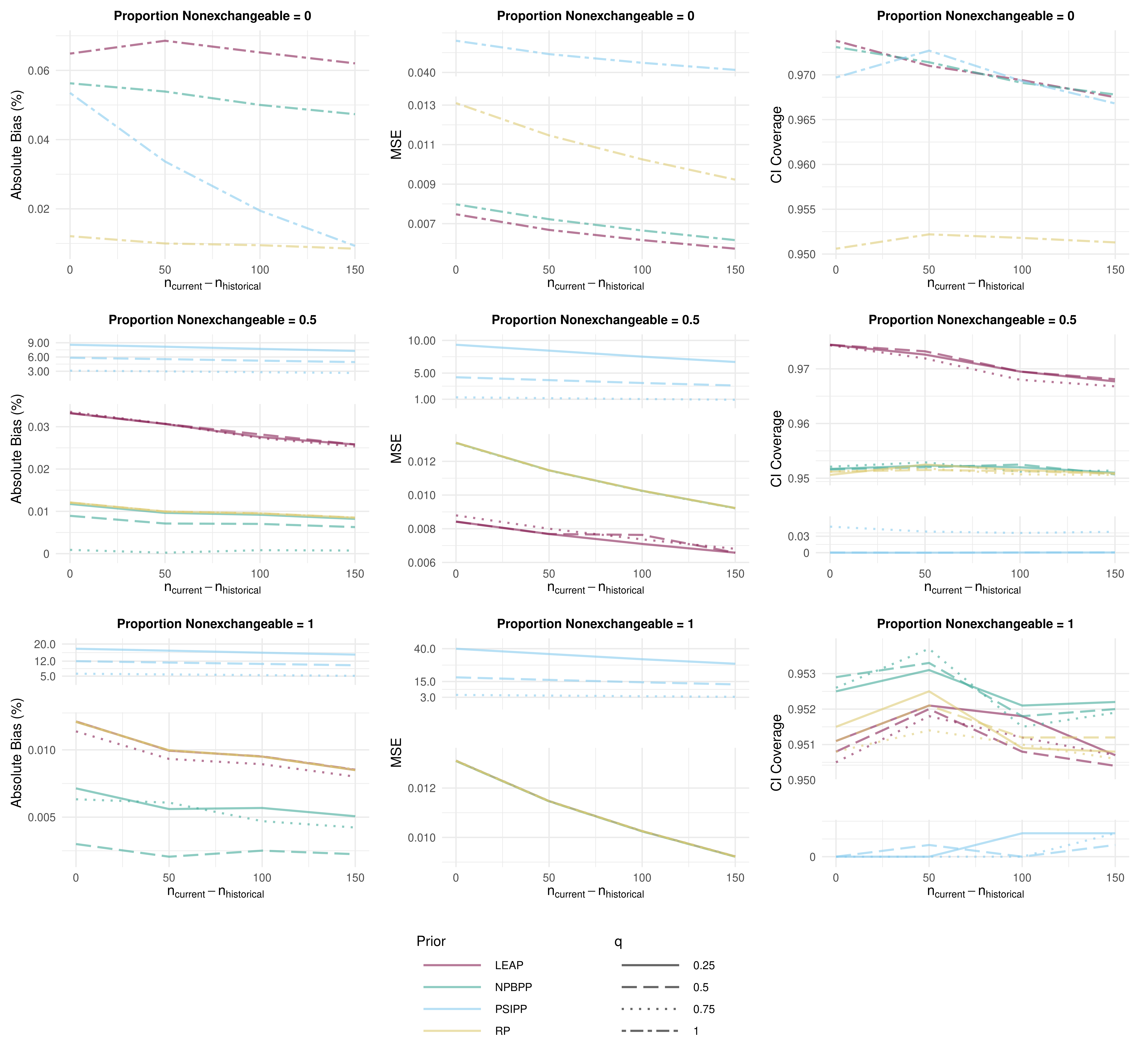}
    \end{center}
    \caption{Percent absolute bias, mean squared error (MSE), and 95\% credible interval (CI) coverage for each prior in the simulation exercise. LEAP = latent exchangeability prior; NPBPP = normalized partial borrowing power prior; PSIPP = propensity score integrated power prior; RP = reference prior}
    \label{fig:sims}
\end{figure}

For the fully exchangeable setting (i.e., the first row), the LEAP has the highest PAB and the lowest MSE, which can be attributed to a bias-variance tradeoff. The bias between the priors, however, is negligible, as the maximum observed PAB for the LEAP is less than 0.08\%. The LEAP and NPBPP have similar MSEs, while the MSE under the PSIPP is over $6$ times that of the LEAP. All three historical data priors had comparable CI coverage.

The PSIPP has poor performance under partial exchangeability and full unexchangeability, where it must be represented as a separate axis. The CI coverage is near 0 for all simulation scenarios and the bias is inflated up to 20\%. This may be because the PSIPP is a static borrowing prior, where the amount of borrowing is taken to be fixed based on the PS (which is treated as known and fixed). 

Under full unexchangeability (i.e., the third row), the LEAP and the NPBPP are again similar in terms of PAB, MSE, and CI coverage, where they essentially have the same features as a reference prior. Where they differ the most is under half exchangeability (i.e., the second row). We see that the LEAP has slightly inflated bias compared to the reference prior and the PBNPP, with an MSE that is about 33\% less than the reference prior across the sample sizes. The LEAP also is the only prior to have higher than nominal CI coverage. This validates the conjecture in Section \ref{sec:leap_general} that efficiency gains can be achieved even if only a fraction of individuals in the historical data are exchangeable.

Overall, it appears that the LEAP is the best performer among these three priors. Having CI coverage at the nominal level under a fully unexchangeable setting is similar to type I error control, which suggests that the LEAP may be a good option when borrowing from external controls. Only the LEAP exhibited efficiency gains under partial exchangeability whereas the performance of the information borrowing priors was similar under full exchangeability.

\section{Data analysis example}
\label{sec:analysis}
In this section, we analyze data from ESTEEM II study, augmenting the control arm using the placebo data from the ESTEEM I study. Key summary statistics for the ESTEEM II trial and the placebo arm in ESTEEM I trial are provided in Table 1 in the Supplementary Appendix. Posterior samples were taken using two variations of the LEAP (i.e., $K = 2$ and $K=3$). We compare posterior summary statistics under the LEAP with the PSIPP, PBNPP, and reference priors described in Section \ref{sec:sims_prior_elicit}.

We constrain all information borrowing priors so that a maximum of $n_{00}^* \le n_{11} - n_{10} = 133$ subjects are borrowed from the historical controls, where $n_{jk}$ is the number
of subjects enrolled in arm $k$ of study $j$, for $k, j \in \{0, 1\}$, where $k = 0$ denotes the control arm and $j = 0$ denotes the historical study. Translated to a proportion, this results in a constraint of $p_{00}^* \le \frac{n_{11} - n_{10}}{n_{00}} \approx 0.48$ for $\gamma_1$ in the LEAP and $a_0$ in the NPBPP. The DIC \citep{spiegelhalter_bayesian_2002}, historical SSC, and the estimated treatment effect on percent change in total PASI Score at Week 16 are evaluated. We report posterior means, standard deviations, and 95\% credible intervals of the treatment effect for each prior.

Using the Stan programming language, 250,000 posterior samples were obtained for each prior after a burn-in period of 2,000. Results for the treatment effect from the data analysis are presented in Table~\ref{tab:results}. Posterior means and standard deviations are provided for all model parameters in Section 5 of the Supplementary Appendix. All priors provided overwhelming evidence that Apremilast results in a lower percent change in PASI scores compared to individuals receiving placebo. 

\begin{table}
    \begin{center}
\begin{tabular}{lccccc}
\toprule
Prior  & DIC & Post. Mean & Post. SD & \multicolumn{2}{c}{$95\%$ CI} \\ 
\midrule
LEAP $(K=2)$  & $4063.27$ & $-31.4$ & $3.35$ & \multicolumn{1}{@{\hspace{\tabcolsep}(}c@{,}}{$-38.0$} & \multicolumn{1}{c@{)\hspace{\tabcolsep}}}{$-24.9$} \\
LEAP $(K=3)$  & $4063.26$ & $-31.5$ & $3.27$ & \multicolumn{1}{@{\hspace{\tabcolsep}(}c@{,}}{$-37.9$} & \multicolumn{1}{c@{)\hspace{\tabcolsep}}}{$-25.0$} \\
PBNPP  & $4062.09$ & $-31.9$ & $3.09$ & \multicolumn{1}{@{\hspace{\tabcolsep}(}c@{,}}{$-37.8$} & \multicolumn{1}{c@{)\hspace{\tabcolsep}}}{$-25.8$} \\
PSIPP  & $4091.21$ & $-28.1$ & $2.69$ & \multicolumn{1}{@{\hspace{\tabcolsep}(}c@{,}}{$-33.3$} & \multicolumn{1}{c@{)\hspace{\tabcolsep}}}{$-22.8$} \\
Reference  & $4064.04$ & $-32.1$ & $3.45$ & \multicolumn{1}{@{\hspace{\tabcolsep}(}c@{,}}{$-38.9$} & \multicolumn{1}{c@{)\hspace{\tabcolsep}}}{$-25.4$} \\
\bottomrule 
\end{tabular}
        \caption{Summary of the posterior density of the treatment effect (mean difference in \% change in PASI score) using the ESTEEM I historical controls and ESTEEM II data sets. DIC = deviance information criterion; Post. Mean = posterior mean; Post. SD = posterior standard deviation, CI = credible interval.}
        \label{tab:results}
    \end{center}
\end{table}

As expected, all information borrowing priors resulted in a lower posterior variance for the treatment effect compared to the reference prior, illustrating the benefit of utilizing the historical data. The posterior variance for the treatment effect is lower for the PBNPP than those for the LEAPs. As the study design (e.g., inclusion/exclusion criteria) was very similar for the ESTEEM I and ESTEEM II studies, it is conceivable that all patients in the historical data set are exchangeable with those in the current data set. As discussed in Section \ref{sec:leap_general}, PPs can be preferable to the LEAP under fully exchangeable settings since they provide discounting on the entire historical data set, while the LEAP averages over partitions subject to the probability of being exchangeable.

The DIC is the lowest for the PBNPP, indicating that the PBNPP is the best fitting prior, while the two LEAPs resulted in slightly higher DIC. The DIC was lower for these priors than the reference prior, indicating a low degree of prior-data conflict. The PSIPP resulted in a substantially larger DIC compared to the reference prior, which may indicate some model misspecification. The fact that the posterior mean under the PSIPP is quite different than those under the other priors provides support of this conjecture. Moreover, there were large differences between the stratum-specific means for the PSIPP (see Section 5 of the Supplementary Appendix). The posterior variance is lowest for the PSIPP, indicating that the PSIPP is the most aggressive prior in terms of information borrowing.

The posterior means and 95\% CIs (in parentheses) for $\gamma_1$ for the LEAP priors were  $0.40$ $(0.14, 0.48)$ and $0.41$ $(0.16, 0.48)$ for $K = 2$ and $K = 3$, respectively, and those for $a_0$ in the PBNPP was $0.34$ and $(0.14, 0.47)$, respectively. Thus, the LEAPs and the PBNPP provide a moderate degree of borrowing from the historical data with moderate uncertianty about how much to be borrowed. Since these values are similar but the posterior under the NPBPP resulted in increased precision, the LEAP may discount more aggressively than the NPBPP.

\section{Conclusion}
\label{sec:conc}
In this paper, we developed a novel prior we refer to as the LEAP. The LEAP is a dynamic borrowing prior that is suitable when only a fraction of individuals in a historical data set can be considered to be exchangeable. It may be particularly attractive under situations in which there is a large amount of historical data, and we only wish to borrow from the most relevant individuals in the historical data sets.

One limitation of the LEAP is that it cannot be used for i.i.d. Bernoulli proportion models. This is because the parameters for the mass function for a mixture of Bernoulli random variables are not identifiable. However, the LEAP could be applied to logistic regression models with at least one continuous covariate (e.g., age). 

The LEAP opens up several avenues for future research. First is the development of the LEAP for other outcome models. For example, a LEAP could be developed for time-to-event data, where censoring precludes using the LEAP as developed in this paper. Second, a LEAP could be developed to borrow information from real-world data (RWD), where confounding is of paramount concern. Although our results indicate that PS approaches should be used with caution, a LEAP that incorporates the PS to formulate individual-specific probabilities of being exchangeable via a hierarchical scheme could be highly relevant since the amount of borrowing would still be outcome-adaptive.

\bibliographystyle{unsrtnat}
\bibliography{references}  

\end{document}


\maketitle

\section{Proof of Theorem 1}
We provide a formal proof of Theorem 1 in the main text. From the main text, the posterior density using the joint LEAP is given by 
\begin{align}
    p(\bm{\theta}, \bm{\gamma}, \bm{c}_0 &| D, D_0)
       \propto \pi_0(\bm{\theta}, \bm{\gamma}) \left[ \left( \prod_{i=1}^{n} f(D_{i} | \bm{\theta}_1) \right) \left( \prod_{\{ c_{0i} = 1 \}} f(D_{0i} | \bm{\theta}_1) \right) \right] \gamma_1^{n_{01}}
       %
       \notag \\
       &
       \times 
       \prod_{k=2}^K \left[ \gamma_k^{n_{0k}}
       \prod_{i=1}^{n_0} f(D_{0i} | \bm{\theta}_k)^{c_{0ik}}
       \right].
       %
       \label{eq:leap_jointpost}
\end{align}
And the joint density of $(D_0, \bm{c}_0)$ is given by
\begin{align}
    f(D_0, \bm{c}_0 | \bm{\theta}, \bm{\gamma}) = \prod_{i=1}^{n_0} \prod_{k=1}^K \left[ \gamma_k f(D_{0i} | \bm{\theta}_k) \right]^{ c_{0ik}}.
    %
    \label{eq:mixdens_latent}
\end{align}

\begin{theorem}
    Let $L(\bm{\theta}_1 | D)$ denote the likelihood of the current data set. If, for every partition $\{ 1, \ldots, K \}^{n_0}$, the following conditions hold, the posterior density $p(\bm{\theta} | D, D_0)$ in \eqref{eq:leap_jointpost} is proper
    \begin{enumerate}
        \item $\int 
        \pi_0(\bm{\theta}_1)
        \left[ \prod_{i=1}^n f(D_i | \bm{\theta}_1) \right]
        \left[ \prod_{\{ i : c_{0i1} = 1 \}}^{n_0} f(D_{0i} | \bm{\theta}_1) \right]
        d\bm{\theta}_1 < \infty$, and
        %
        \item $\int \int \gamma_1^{n_{01}}
        \prod_{k=2}^K \gamma_k^{n_{0k}} \prod_{i=1}^{n_0} f(D_{0i} | \bm{\theta}_k)^{c_{0ik}}
        \pi_0( \bm{\theta}_{(-1)}, \bm{\gamma} | \bm{\theta}_1)
        d\bm{\theta}_{(-1)}d\bm{\gamma} < \infty$ for every $\bm{\theta}_1$.
    \end{enumerate}
    \label{thm:postproper}
\end{theorem}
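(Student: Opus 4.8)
The plan is to establish propriety by showing that the normalizing constant of \eqref{eq:leap_jointpost} is finite, where that constant is obtained by summing over the latent allocations $\bm{c}_0$ and integrating out the nuisance parameters $\bm{\gamma}$ and $\bm{\theta}_{(-1)} = (\bm{\theta}_2,\ldots,\bm{\theta}_K)$. Writing
\[
p(\bm{\theta}\mid D, D_0) \propto \sum_{\bm{c}_0 \in \{1,\ldots,K\}^{n_0}} \int p(\bm{\theta}, \bm{\gamma}, \bm{c}_0 \mid D, D_0)\, d\bm{\gamma},
\]
propriety amounts to $\sum_{\bm{c}_0} \int\int p(\bm{\theta}, \bm{\gamma}, \bm{c}_0 \mid D, D_0)\, d\bm{\gamma}\, d\bm{\theta} < \infty$. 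The first observation is that the allocation space $\{1,\ldots,K\}^{n_0}$ is finite, with exactly $K^{n_0}$ elements, so it suffices to prove that each summand---the integral for a single fixed partition $\bm{c}_0$---is finite, since a finite sum of finite terms is finite. This is precisely why the hypotheses are imposed ``for every partition.''

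First I would fix a partition $\bm{c}_0$ and decompose the prior as $\pi_0(\bm{\theta}, \bm{\gamma}) = \pi_0(\bm{\theta}_1)\, \pi_0(\bm{\theta}_{(-1)}, \bm{\gamma} \mid \bm{\theta}_1)$. Because every factor in \eqref{eq:leap_jointpost} is nonnegative, Tonelli's theorem licenses any order of integration, and in particular it lets me separate the $\bm{\theta}_1$-integral from the joint integral over $(\bm{\theta}_{(-1)}, \bm{\gamma})$. Collecting the factors that depend only on $\bm{\theta}_1$---namely $\pi_0(\bm{\theta}_1)$, the current-data likelihood $\prod_{i=1}^n f(D_i\mid\bm{\theta}_1) = L(\bm{\theta}_1\mid D)$, and the historical terms $\prod_{\{i : c_{0i1}=1\}} f(D_{0i}\mid\bm{\theta}_1)$---the summand factors as
\[
\int \pi_0(\bm{\theta}_1) \Big[\prod_{i=1}^n f(D_i\mid\bm{\theta}_1)\Big] \Big[\prod_{\{i:c_{0i1}=1\}} f(D_{0i}\mid\bm{\theta}_1)\Big]\, g(\bm{\theta}_1)\, d\bm{\theta}_1,
\]
where $g(\bm{\theta}_1)$ is exactly the inner integral appearing in Condition 2. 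Condition 2 guarantees $g(\bm{\theta}_1) < \infty$ for each $\bm{\theta}_1$, while Condition 1 guarantees that the outer integrand, absent the factor $g$, integrates to a finite value.

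The main obstacle is the combination step: pointwise finiteness of $g(\bm{\theta}_1)$ does not by itself render the product of the Condition 1 integrand with $g(\bm{\theta}_1)$ integrable. To close the argument cleanly I would exploit the structure of the LEAP prior, in which $(\bm{\theta}_{(-1)}, \bm{\gamma})$ are a priori independent of $\bm{\theta}_1$, so that $\pi_0(\bm{\theta}_{(-1)}, \bm{\gamma}\mid\bm{\theta}_1) = \pi_0(\bm{\theta}_{(-1)}, \bm{\gamma})$ and hence $g(\bm{\theta}_1) \equiv M$ is a finite constant; the summand is then bounded by $M$ times the finite Condition 1 integral. In the general case where the conditional prior genuinely depends on $\bm{\theta}_1$, the clean statement requires reading Condition 2 as a uniform bound $\sup_{\bm{\theta}_1} g(\bm{\theta}_1) \le M < \infty$ (or a domination $g(\bm{\theta}_1) \le h(\bm{\theta}_1)$ with $h$ integrable against the Condition 1 measure), which is the natural hypothesis that makes the two conditions multiply. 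Summing the resulting finite bounds over the $K^{n_0}$ partitions then yields a finite normalizing constant and therefore propriety of $p(\bm{\theta}\mid D, D_0)$.
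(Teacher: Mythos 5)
Your proposal follows the same route as the paper's proof: write the normalizing constant of \eqref{eq:leap_jointpost} as a sum over the finitely many ($K^{n_0}$) partitions, factor the prior as $\pi_0(\bm{\theta}_1)\,\pi_0(\bm{\theta}_{(-1)},\bm{\gamma}\mid\bm{\theta}_1)$, separate the $\bm{\theta}_1$-integral from the $(\bm{\theta}_{(-1)},\bm{\gamma})$-integral, and conclude because a finite sum of finite terms is finite. Where you differ is the combination step, and there you are in fact more careful than the paper. The paper's display \eqref{eq:nc1} writes each summand as the product of the Condition~1 integral and the Condition~2 integral, i.e., it pulls the factor
$\int\int \gamma_1^{n_{01}}\prod_{k=2}^K\gamma_k^{n_{0k}}\prod_{i=1}^{n_0} f(D_{0i}\mid\bm{\theta}_k)^{c_{0ik}}\,\pi_0(\bm{\theta}_{(-1)},\bm{\gamma}\mid\bm{\theta}_1)\,d\bm{\theta}_{(-1)}d\bm{\gamma}$
outside the $d\bm{\theta}_1$ integral as though it were constant, even though it depends on $\bm{\theta}_1$ through the conditional prior. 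That is exactly the move you flag as invalid in general: pointwise finiteness of the inner integral $g(\bm{\theta}_1)$ does not make $\int (\text{Condition 1 integrand})\, g(\bm{\theta}_1)\, d\bm{\theta}_1$ finite (one can cook up an unbounded likelihood contribution together with a conditional prior that concentrates ever more sharply near the historical MLE as $\lVert\bm{\theta}_1\rVert\to\infty$, so that $g$ grows faster than the Condition~1 integrand decays). Your two repairs---prior independence of $(\bm{\theta}_{(-1)},\bm{\gamma})$ from $\bm{\theta}_1$, under which $g$ is constant and the paper's factorization becomes exact, or reading Condition~2 as a uniform or dominated bound---are precisely the hypotheses under which the stated conclusion actually follows; the independence case is the one the paper itself uses everywhere downstream (Section~\ref{sec:conjugate} takes $\pi_0(\bm{\theta},\bm{\gamma})=\pi_0(\bm{\gamma})\prod_{k=1}^K\pi_0(\bm{\theta}_k)$). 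Note also that the paper's proof trails off mid-sentence (``\ldots if and only if for every $\bm{c}_0 \in \{1,\ldots,K\}^{n_0}$'') without ever stating the per-partition finiteness claim it needs, so your write-up is the more complete of the two.
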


\begin{proof}
Let $\tilde{p}$ denote the unnormalized posterior density under the LEAP, given by
\begin{align}
    \tilde{p}(\bm{\theta}, \bm{\gamma}, \bm{c}_0 | D, D_0) &= \pi_0(\bm{\theta}, \bm{\gamma})
    \left[\prod_{i=1}^{n} f(D_{0i} | \bm{\theta}_1)\right]
    \left[ \prod_{k=1}^K \gamma_k^{n_{0k}} \prod_{i=1}^{n_{0}} f(D_{0i} | \bm{\theta}_k)^{c_{0ik}}   \right]
    %
    \label{eq:unnormalizedpost}
    .
\end{align}
The normalizing constant for \eqref{eq:unnormalizedpost} is thus given by
\begin{align}
    Z(D, D_0) &=  \sum_{ \bm{c}_0 \in \{ 1, \ldots, K \}^{n_0} } \int \int \int  
    \tilde{p}(\bm{\theta}, \bm{\gamma}, \bm{c}_0 | D, D_0)
      d\bm{\gamma} d\bm{\theta}_{(-1)} d\bm{\theta}_1
    %
    , \notag \\
    &= 
    \sum_{ \bm{c}_0 \in \{ 1, \ldots, K \}^{n_0} }
    \int 
    \pi_0(\bm{\theta}_1)
    \left[ \prod_{i=1}^n f(D_i | \bm{\theta}_1) \right]
    \left[ \prod_{\{ i : c_{0i1} = 1 \}}^{n_0} f(D_{0i} | \bm{\theta}_1) \right]
    d\bm{\theta}_1
    \notag \\
    &\hspace{1cm} \times
    \int \int \gamma_1^{n_{01}}
    \prod_{k=2}^K \gamma_k^{n_{0k}} \prod_{i=1}^{n_0} f(D_{0i} | \bm{\theta}_k)^{c_{0ik}}
    \pi_0( \bm{\theta}_{(-1)}, \bm{\gamma} | \bm{\theta}_1)
    d\bm{\theta}_{(-1)}d\bm{\gamma},
    %
    \label{eq:nc1}
\end{align}
where 
$\bm{\theta}_{(-1)} = (\bm{\theta}_2', \ldots, \bm{\theta}_K')'$ are the parameters excluding the first component,
$\pi_0(\bm{\theta}_1)$ is the marginal prior for $\bm{\theta}_1$,
and $\pi_0(\bm{\theta}_{(-1)}, \bm{\gamma} | \bm{\theta}_1)$ is the conditional prior for $\bm{\theta}_{(-1)}, \bm{\gamma}$ given $\bm{\theta}_1$.
Since the normalizing constant in \eqref{eq:nc1} involves a sum of nonnegative components, it is finite provided each component of the sum is finite.
It follows that $Z(D, D_0)$ is finite (and hence, the posterior under the LEAP is proper) if and only if for every $\bm{c}_0 \in \{ 1, \ldots, K \}^{n_0}$.
\end{proof}

For condition (1), note that, for certain models, $\pi_0(\bm{\theta}_1)$ need not be proper. Indeed, if $\int \pi_0(\bm{\theta}_1) d\bm{\theta}_1 = \infty$ but $\int \prod_{i=1}^n f(D_i | \bm{\theta}_1) \pi_0(\bm{\theta}_1) d\bm{\theta}_1 < \infty$, the posterior density under the LEAP is proper. Hence, under certain contexts (e.g., the normal linear model), we may elicit an improper initial prior for parameters pertaining to the current data model.

For condition (2), note that the marginal initial prior density for $\bm{\gamma}$ is defined over the $K$-dimensional simplex, so the marginal prior for $\bm{\gamma}$ is always proper. Since there exist partitions $\bm{c}_0^*$ such that $n_{0k} = 0$ for some $k$, it is clear that we require $\pi_0(\bm{\theta}_{(-1)})$ to be proper in order for the posterior to be proper whenever $\bm{\theta}_{(-1)}$ is unbounded.

\section{Posterior distribution under conjugate priors}
\label{sec:conjugate}
In this section, we show that, under a conjugate prior, the posterior distribution conditional on $\bm{c}_0$ has a simple closed form. Using this result, we develop an efficient Gibbs sampler to conduct posterior inference. 

Let $\pi_0(\bm{\theta}, \bm{\gamma}) = \pi_0(\bm{\gamma}) \prod_{k=1}^K \pi_0(\bm{\theta}_k)$ and suppose that $\pi_0(\bm{\theta}_k)$ is conjugate for the current data (e.g., if $D$ consists of i.i.d. Poisson observations, then $\pi_0(\bm{\theta}_k)$ is a gamma density). Let $g(\bm{\theta}_k | \bm{\eta}_k)$ denote this conjugate distribution, where $\bm{\eta}_k$ is a set of hyperparameters. Furthermore, let $\pi_0(\bm{\gamma} | \bm{\alpha}_0) \propto \prod_{k=1}^K \gamma_k^{\alpha_{0k} - 1}$, where $\bm{\alpha}_0 = (\alpha_1, \ldots, \alpha_{0k})'$ are concentration parameters and $\alpha_{0k} > 0$ for every $k \in \{1, \ldots, K\}$. Note that $\bm{\gamma}$ is a Dirichlet random vector a priori, which is conjugate to the categorical probability mass function (p.m.f.) in \eqref{eq:mixdens_latent}. It follows that the joint posterior distribution under the LEAP is given by
\begin{align}
    p(\bm{\theta}, \bm{\gamma}, \bm{c}_0 | D, D_0) \propto 
       g(\bm{\theta}_1 | D, D_{01}, \bm{\eta}_1)
       \left[ \prod_{k=1}^K \gamma_k^{n_{0k} + \alpha_{0k} - 1} \right]
       \left[ 
       \prod_{k=1}^K g(\bm{\theta}_k | D_{0k}, \bm{\eta}_k)
       \right],
       %
       \label{eq:leap_post_conj}
\end{align}
where $D_{0k}$ denotes subset of the historical data classified into class $k$, $g(\bm{\theta}_k | D_{0k}, \bm{\eta}_k)$ conjugate prior updated with data $D_{0k}$, and $g(\bm{\theta}_1 | D, D_{01})$ is analogously defined.

Under independent conjugate priors, it is easy to sample from the posterior in \eqref{eq:leap_post_conj} using an efficient Gibbs sampler. In particular, we may sample $\bm{\theta}, \bm{\gamma} | \bm{c}_0$ and then $\bm{c}_0 | \bm{\theta}, \bm{\gamma}$. Suppose the current state is given by $\bm{\theta}^{(t)}$, $\bm{c}_0^{(t)}$. We may obtain a new sample via:
\begin{enumerate}
    \item Sample $\bm{\theta}_1^{(t+1)} \sim g(\cdot | D, D_{01}^{(t)}, \bm{\eta}_{1})$.
    %
    \item For $k \in \{2, \ldots, K\}$, sample $\bm{\theta}_k^{(t+1)} \sim g(\cdot | D_{0k}^{(t)}, \bm{\eta}_k)$
    %
    \item Sample $\bm{\gamma}^{(t+1)} \sim \text{Dir}(\bm{n}_0^{(t)} + \bm{\alpha}_0)$
    %
    \item For $i \in \{1, \ldots, n_0\}$, sample $c_{0i}^{(t+1)} \sim \text{Categorical}(p_{i1}^{(t+1)}, \ldots, p_{iK}^{(t+1)})$, where
    $$
    p_{ik}^{(t+1)} \propto \gamma_k^{(t+1)} f(D_{0i} | \bm{\theta}_k^{(t+1)})
    $$
\end{enumerate}
where we let $\bm{n}_0^{(t)} = (n_{01}^{(t)}, \ldots, n_{0K}^{(t)})'$ denote the number of subjects assigned to each class at the $t^{th}$ iteration.

It is worth noting that MCMC samplers for mixture models suffer from the dreaded ``label switching problem,'' which is caused when the posterior density for a parameter can switch indices with another parameter to yield an equivalent likelihood. However, since we are only concerned with the posterior density of $\bm{\theta}_1$, the problem is avoided since the current data set is \emph{always} parameterized by $\bm{\theta}_1$, so that the first component of the mixture model in \eqref{eq:mixdens_latent} always shares its parameter with the current data.

\section{The partially truncated Dirichlet distribution (PTDD)}
\subsection{Definition of the distribution}
Let $\bm{X} = (X_1, \ldots, X_K)'$ be a $K$-dimensional Dirichlet vector with concentration parameter $\bm{\alpha} = (\alpha_1, \ldots, \alpha_K)$. Then the pdf of $\bm{X}$ is given by
$$
  f_{\bm{X}}(\bm{x}) = \frac{1}{B(\bm{\alpha})} \prod_{k=1}^K x_k^{\alpha_k - 1}.
$$

Consider the vector $\bm{Y} = (Y_1, \ldots, Y_K)'$ defined as $Y_1 = X_1 1\{ a < X_1 < b \}$, where $0 \le a < b \le 1$, and $Y_k = X_k$ with $(Y_1, \ldots, Y_K) \in [0,1]^K$. Let $\Omega = \{ \bm{x} \in [0,1]^K : a \le x_1 \le b \}$ The joint density is given by
\begin{align}
    f_{\bm{Y}}(\bm{y}) = \frac{\prod_{k=1}^K y_k^{\alpha_k - 1}}{B(\bm{\alpha} ; a, b)} \text{ for } \bm{y} \in \Omega,
    %
    \label{eq:ptdd}
\end{align}
where $B(\bm{\alpha} ; a, b) = \int_{\Omega} \prod_{k=1}^K y_k^{\alpha_k - 1} d\bm{y}$. We refer to the distribution with pdf given by \eqref{eq:ptdd} as the partially truncated Dirichlet distribution (PTDD). 

\subsection{Aggregation property}
We wish to find the marginal distribution of $Y_1$ and the conditional distribution $\bm{Y}_2 | Y_1$. To that end, consider the $(K-1)$-dimensional vector $\bm{Z} = (Z_1, \bm{Z}_2, Z_3)'$, where $Z_1 = Y_1$, $\bm{Z}_2 = (Y_2, \ldots, Y_{K-2})'$, and $Z_3 = (Y_{K-1} + Y_K)'$. We have
\begin{align}
  f_{\bm{Z}}(\bm{z}) &\propto \int_0^{z_3} z_1^{\alpha_1 - 1} 1\{ a < z_1 < b \} \left[ \prod_{k=2}^{K-2} z_k^{\alpha_k - 1} \right] u^{\alpha_{K-1} - 1} (z_3 - u)^{\alpha_K - 1} du
  %
  \notag \\
  &=  z_1^{\alpha_1 - 1} 1\{ a < z_1 < b \} \left[ \prod_{k=2}^{K-2} z_k^{\alpha_k - 1} \right]
      \int_{0}^1 (z_3 v)^{\alpha_{K-1} - 1} (z_3 - z_3 v)^{\alpha_K - 1} z_3 dv 
  %
  \notag \\
  &= z_1^{\alpha_1 - 1} 1\{ a < z_1 < b \} \left[ \prod_{k=2}^{K-2} z_k^{\alpha_k - 1} \right] z_3^{\alpha_{K-1} + \alpha_{K} - 1}
  \int_{0}^1 v^{\alpha_{K-1} - 1}(1 - v)^{\alpha_K - 1}
  %
  \notag \\
  &\propto z_1^{\alpha_1 - 1} 1\{ a < z_1 < b \} \left[ \prod_{k=2}^{K-2} z_k^{\alpha_k - 1} \right] z_3^{\alpha_{K-1} + \alpha_{K} - 1},
  %
  \label{eq:aggregation}
\end{align}
where, in the second line, we used the substitution $u = z_3 v$. It follows from \eqref{eq:aggregation} and \eqref{eq:ptdd} that $\bm{Z}$ follows a PTDD, i.e.,
$$
  (Z_1, \ldots, Z_{K-1}) \sim \text{PTDD}\left( \alpha_1, \ldots, \alpha_{K-2}, \alpha_{K-1} + \alpha_K; a, b \right).
$$
For the traditional Dirichlet distribution, this property is known as the ``aggregation property.'' 

\subsection{Marginal distribution of the truncated component}
Let $\bm{Z} = (Y_1, Y_2 + \ldots + Y_K)'$ Note that $\bm{Z} = (Y_1, 1 - Y_1)$, which only involves the first component of $\bm{Y}$. It follows from the aggregation property in \eqref{eq:aggregation} that we may write the marginal pdf of $Y_1$ as
\begin{align}
    f_{Y_1}(y) \propto y^{\alpha_1 - 1} (1 - y)^{\sum_{k=2}^K \alpha_k - 1} 1\{ a < y < b \},
    \notag
\end{align}
which is a truncated beta distribution, i.e., the normalized PDF of $Y_1$ may be written as
\begin{align}
    f_{Y_1}(y) = = \frac{y^{\alpha_1 - 1} (1 - y)^{\sum_{k=2}^K \alpha_k - 1}}{B\left( \alpha_1, \sum_{k=2}^K \alpha_k; a, b \right)} \text{ for } a < y < b,
\end{align}
where $B\left( \alpha, \beta; a, b \right) = \int_{a}^b y^{\alpha - 1}(1 - y)^{\beta - 1} dy$ is easily computed using the CDF of a beta random variable, i.e.,
$$
  B\left( \alpha, \beta; a, b \right) = F_{\text{Beta}}(b | \alpha, \beta) - F_{\text{Beta}}(a | \alpha, \beta).
$$

\subsection{Conditional distribution}
We now derive the conditional distribution of $\bm{Y}_2 | Y_1$. Let $\alpha_0 = \sum_{k=2}^K \alpha_k$. We have
\begin{align}
  f(\bm{y}_2 | y_1) &= \frac{f(y_1, \bm{y}_2)}{f(y_1)} = \frac{ \frac{\prod_{k=1}^K y_k^{\alpha_k - 1}}{B(\bm{\alpha} ; a, b)} }{\frac{y_1^{\alpha_1 - 1} (1 - y_1)^{\sum_{k=2}^K \alpha_k - 1}}{B\left( \alpha_1, \sum_{k=2}^K \alpha_k; a, b \right)} }
  %
  \notag \\
  &= \frac{B\left( \alpha_1, \sum_{k=2}^K \alpha_k; a, b \right)}{B(\bm{\alpha} ; a, b)}
     \left[ \prod_{k=2}^K y_k^{\alpha_k - 1} \right] 
     (1 - y_1)^{-(\alpha_0 - 1)}
  %
  \notag \\
  &= \frac{B\left( \alpha_1, \sum_{k=2}^K \alpha_k; a, b \right)}{B(\bm{\alpha} ; a, b)}
       \left[ \prod_{k=2}^K y_k^{\alpha_k - 1} (1 - y_1)^{-\alpha_k} \right] (1 - y_1)
  %
  \notag \\
  &= \frac{B\left( \alpha_1, \sum_{k=2}^K \alpha_k; a, b \right)}{B(\bm{\alpha} ; a, b)}
     \left[ \prod_{k=2}^K \left(\frac{y_k}{1 - y_1}\right)^{\alpha_k - 1} \right] (1 - y_1)^{-[(K-1) - 1]}
  %
  \notag \\
  &\propto \prod_{k=2}^K \left( \frac{y_k}{1 - y_1} \right)^{\alpha_k - 1}.
  \label{eq:cond_dens}
\end{align}
It follows from \eqref{eq:cond_dens} that
\begin{align}
    \frac{\bm{Y}_2}{1 - y_1} \sim \text{Dirichlet}\left( \alpha_2, \ldots, \alpha_K \right),
\end{align}
i.e., $\bm{Y}_2 | Y_1 = y_1$ has a scaled Dirichlet distribution, which we may write as
$$
  \bm{Y}_2 | Y_1 = y_1 \sim (1 - y_1) \text{Dirichlet}(\alpha_2, \ldots, \alpha_K).
$$

It follows that, for $\bm{y} \in \Omega$, the joint density is given by
\begin{align}
    f(\bm{y}) &= f(y_1) f(\bm{y}_2 | y_1) \notag \\
    %
    &= \frac{1}{ B(\alpha_1, \alpha_0, a, b) B(\bm{\alpha}_{(-1)})}  
         y_1^{\alpha_1 - 1}(1 - y_1)^{\alpha_0 - 1} \left[\prod_{k=2}^K \left( \frac{y_k}{1 - y_1}  \right)^{\alpha_k - 1} \right] (1 -y_1)^{-(K-1)}
    \notag \\
    &= \frac{1}{ B(\alpha_1, \alpha_0, a, b) B(\bm{\alpha}_{(-1)})}  
         \prod_{k=1}^K y_k^{\alpha_k - 1},
\end{align}
where $\bm{\alpha}_{(-1)} = (\alpha_2, \ldots, \alpha_K)'$. Hence, the PTD density is computable using software implementations of the multivariate beta function and the CDF of the beta density.

\subsection{Random number generation}
It is quite easy to generate random numbers from the PTDD. We may use the following algorithm
\begin{enumerate}
    \item Generate $y_1 \sim \text{TruncBeta}(\alpha_1, \alpha_0; a, b)$
    \item Generate $\tilde{\bm{y}}_2 \sim \text{Dirichlet}(\alpha_2, \ldots, \alpha_K)$ and set $\bm{y}_2 = (1 - y_1) \tilde{\bm{y}}_2$.
\end{enumerate}
For (1), we may generate truncated beta random variables using rejection sampling or the inverse-CDF method, recognizing that the CDF of $Y_1$ is given by 
$$
  F_{Y_1}(y) = \frac{F_{\text{Beta}}(y | \alpha_1, \alpha_0) - F_{\text{Beta}}(a | \alpha_1, \alpha_0)}{F_{\text{Beta}}(b | \alpha_1, \alpha_0) - F_{\text{Beta}}(a | \alpha_1, \alpha_0) },
$$
and hence we can generate truncated beta random variables directly via
\begin{enumerate}
    \item Generate $u_1 \sim \text{Unif}(F_{\text{Beta}}(a | \alpha_1, \alpha_0), F_{\text{Beta}}(b | \alpha_1, \alpha_0))$
    \item Set $y_1 = F_{\text{Beta}}^{-1}\left( u_1 \right)$
\end{enumerate}

\subsection{Moments of the PTD Distribution}
Expectations of the PTD can be computed in closed-form. Specifically, if 
$\bm{Y}= (Y_1, \ldots, Y_K) \sim \text{PTD}(\bm{\alpha}, a, b)$, then
\begin{align}
    E\left[ \prod_{k=1}^K Y_k^{m_k} \right] 
      &= \frac{1}{B\left(\alpha_1, \alpha_0, a, b\right)B\left(\bm{\alpha}_{(-1)}\right)} \int_{\Omega} \prod_{k=1}^K y_k^{m_k + \alpha_k - 1} d\bm{y},
      %
      \notag\\
      &= \frac{ B\left(\alpha_1 + m_1, \alpha_0 + m_0, a, b \right) B\left (\bm{\alpha}_{(-1)} + \bm{m}_{(-1)} \right) }{ B\left(\alpha_1, \alpha_0, a, b)B(\bm{\alpha}_{(-1)}\right) },
      %
      \label{eq:ptd_moments}
\end{align}
where $\alpha_0 = \sum_{k=2}^K \alpha_k$, $m_0 = \sum_{k=2}^K m_k$ and $\bm{m}_{(-1)} = (m_2, \ldots, m_K)'$. In particular, the expression in \eqref{eq:ptd_moments} indicates that
$$
  E[X_1] = \frac{ B\left(\alpha_1 + 1, \alpha_0, a, b \right) }{ B\left(\alpha_1, \alpha_0, a, b \right) },
$$
from which we recover $E[X_1] = \alpha_1 / (\alpha_1 + \alpha_0)$ when $(m_2, \ldots, m_K) = \bm{0}$, $a = 0$, and $ b= 1$, which is the mean of an untruncated beta distribution.

\subsection{Conjugacy of the PTBB}
Suppose that $\bm{x} | \bm{\gamma} \sim \text{Multinomial}(\bm{n}, \bm{\gamma})$, where $\bm{n} = (n_1, \ldots, n_K)$ is known and $\bm{\gamma} = (\gamma_1, \ldots, \gamma_K)'$. Suppose, a priori, that $\bm{\gamma} \sim \text{PTBB}\left( \bm{\alpha}, a, b \right)$. Then
\begin{align}
    p(\bm{\gamma} | \bm{x}) \propto 1\{ a < \gamma_k < b \} \prod_{k=1}^K \gamma_k^{n_k + \alpha_k - 1},
\end{align}
which is the kernel of a PTBB random variable with concentration parameter $\bm{n} + \bm{\alpha}$ and truncation parameters $a$ and $b$. It follows that the PTBB is conjugate to the multinomial distribution.

\section{LEAP for the normal linear model}
In this section, we develop the LEAP for the normal linear model, providing detailed derivations.

For the linear model, the full LEAP is given by
\begin{align}
    \pi( \bm{\theta}, &\bm{c}_0, \bm{\gamma} | D_0) \propto \prod_{i=1}^{n_0} \prod_{k=1}^K \left[ \gamma_k \phi(y_{0i} | \bm{x}_i'\bm{\beta}_k, \tau_k^{-1}) \right]^{c_{0ik}} \pi_0(\bm{\theta}, \bm{\gamma}),
    %
    \notag \\
    &\propto \prod_{k=1}^K \gamma_k^{n_{0k}} 
    \tau_k^{n_{0k} / 2} 
    \exp\left\{ -\frac{\tau_k}{2} \sum_{i=1}^{n_0} c_{0ik} (\bm{y}_{0i} - \bm{x}_{0i}'\bm{\beta}_k)^2 \right\} \pi_0(\bm{\theta}, \bm{\gamma}),
    %
    \notag \\
    &\propto \prod_{k=1}^K \gamma_k^{n_{0k}} \tau_k^{n_{0k} / 2} \exp\left\{ -\frac{\tau_k}{2} \left( \bm{y}_0 - \bm{X}_0 \bm{\beta}_k \right)' \bm{C}_{0k} \left( \bm{y}_0 - \bm{X}_0 \bm{\beta}_k \right) \right\} 
    \pi_0(\bm{\theta}, \bm{\gamma}),
    \label{eq:lm_leap_likelihood_hist}
\end{align}
where $\bm{C}_{0k} = \text{diag}\left\{ c_{0ik}, i = 1, \ldots, n_0 \right\}$ is a diagonal matrix of 0's and 1's indicating whether an individual belongs to class $k$ and $\sum_{k=1}^K \bm{C}_{0k} = \bm{I}_n$. Note that, ignoring $\bm{\gamma}$ and the initial prior $\pi_0(\bm{\theta}, \bm{\gamma})$, the prior in \eqref{eq:lm_leap_likelihood_hist} is analogous to a product of $K$ independent likelihoods from a normal linear model, each with its own regression coefficient and precision parameter. In other words, we assume that the historical data come from a Gaussian mixture model (GMM). We may thus use properties of the Bayesian linear model to compute the joint induced prior from the mixture model.

Note that $\bm{C}_{0k}$ is idempotent, i.e., $\bm{C}_{0k} \bm{C}_{0k} = \bm{C}_{0k}$. Let $\bm{y}_{0k} = \bm{C}_{0k} \bm{y}_0$, $\bm{X}_{0k} = \bm{C}_{0k} \bm{X}_0$, and 
$$
\bm{M}_{0k} = \bm{X}_{0k} \left( \bm{X}_{0k}' \bm{X}_{0k} \right)^{-}\bm{X}_{0k}' 
  = \bm{C}_{0k} \bm{X}_{0}(\bm{X}_{0}'\bm{C}_{0k} \bm{X}_0)^{-1} \bm{X}_0'\bm{C}_{0k},
$$
where $\bm{A}^{-}$ denotes a generalized inverse of $\bm{A}$.
Further, let 
$$
    \hat{\bm{\beta}}_{0k} = (\bm{X}_{0k}'\bm{X}_{0k})^{-} \bm{X}_{0k}'\bm{y}_{0k} = (\bm{X}_{0}'\bm{C}_{0k}\bm{X}_{0})^{-}\bm{X}_0'\bm{C}_{0k}\bm{y}_0
    .
$$
Let the initial prior be given by
\begin{align}
    \pi_0(\bm{\theta}, \bm{\gamma}) \propto \prod_{k=1}^K \gamma_k^{\alpha_{0k} - 1} \tau^{\delta_{0k} / 2 - 1} \exp\left\{ -\frac{\tau_k \xi_{0k}}{2} \right\} \tau_k^{p/2} \exp\left\{ \left( \bm{\beta}_k - \bm{\mu}_{0k} \right)' \bm{\Omega}_{0k}^{-1} \left( \bm{\beta}_k - \bm{\mu}_{0k} \right) \right\}
    ,
    \label{lm_initprior}
\end{align}
i.e., the initial prior is a product of $K$ independent priors in the normal-gamma family. Thus,
\begin{align}
  \pi( \bm{\theta}, &\bm{c}_0, \bm{\gamma} | D_0) \propto
     \prod_{k=1}^K \gamma_k^{n_{0k} + \alpha_{0k} + p - 1} \tau_k^{(n_{0k} + \delta_{0k}) / 2 - 1}
    \exp\left\{ -\frac{\tau_k}{2} \left[ \xi_{0k} + \bm{y}_{0k}'(I_{n_0} - \bm{M}_{0k}) \bm{y}_{0k} \right] \right\}
    \notag \\
    &\hspace{0.5cm} \times 
    \exp\left\{ -\frac{\tau_k}{2} \left( \bm{\beta}_k - \hat{\bm{\beta}}_{0k} \right)' \bm{X}_0'\bm{C}_{0k}\bm{X}_0 \left( \bm{\beta}_k - \hat{\bm{\beta}}_{0k} \right)
    + \left( \bm{\beta}_k - \bm{\mu}_{0k} \right)' \bm{\Omega}_{0k} \left( \bm{\beta}_k - \bm{\mu}_{0k} \right)
    \right\}.
    \label{eq:lm_leap_conj1}
\end{align}
Let $\tilde{\bm{\beta}}_{0k} = \bm{\Lambda}_{0k} \bm{\mu}_{0k} + (\bm{I}_p - \bm{\Lambda}_{0k}) \hat{\bm{\beta}}_{0k}$, where $\bm{\Lambda}_{0k} = \left( \bm{X}_0'\bm{C}_{0k} \bm{X}_0 + \bm{\Omega}_{0k} \right)^{-1} \bm{\Omega}_{0k}$. Note that $\tilde{\bm{\beta}}_{0k}$ is a convex combination (weighted average) of the MLE of class $k$ and the prior for $\bm{\beta}_k$. Moreover, let 
$\tilde{s}_{0k}^2 = \bm{y}_{0k}'\left( I_{n_0} - \bm{M}_{0k} \right) \bm{y}_{0k} + (\hat{\bm{\beta}}_{0k} - \bm{\mu}_{0k})' (\bm{\Lambda}_{0k}' \bm{X}_{0}'\bm{C}_{0k} \bm{X}_0) (\hat{\bm{\beta}}_{0k} - \bm{\mu}_{0k}) + \xi_{0k}$. It follows from properties of the Bayesian linear model that the joint leap may be expressed as
\begin{align}
      &\pi( \bm{\theta}, \bm{c}_0, \bm{\gamma} | D_0) \propto
     \prod_{k=1}^K 
     \left\lvert \bm{X}_0'\bm{C}_{0k} \bm{X}_0 + \bm{\Omega}_{0k} \right\rvert^{-1/2}
     \frac{\Gamma\left( \frac{n_{0k} + \delta_{0k}}{2} \right)}{\left( \tilde{s}_{0k}^2 / 2 \right)^{(n_{0k} + \alpha_{0k})/2}}
     \notag \\
     %
     &\hspace{0.5cm} \times
     \gamma_k^{n_{0k} + \alpha_{0k} - 1}
         \frac{\left( \tilde{s}_{0k}^2 / 2 \right)^{(n_{0k} + \alpha_{0k})/2}}{\Gamma\left( \frac{n_{0k} + \delta_{0k}}{2} \right)}
     \tau_k^{(n_{0k} + \delta_{0k}) / 2 - 1}
    \exp\left\{ -\frac{\tau_k}{2} \tilde{s}^2_{0k} \right\}
    \notag \\
    &\hspace{0.5cm} \times 
    \tau_k^{p/2} 
         \left\lvert \bm{X}_0'\bm{C}_{0k} \bm{X}_0 + \bm{\Omega}_{0k} \right\rvert^{1/2}
    \exp\left\{ -\frac{\tau_k}{2} \left( \bm{\beta}_k - \tilde{\bm{\beta}}_{0k} \right)' \left( \bm{X}_0'\bm{C}_{0k}\bm{X}_0 + \bm{\Omega}_{0k} \right) \left( \bm{\beta}_k - \tilde{\bm{\beta}}_{0k} \right)
    \right\}.
    \label{eq:lm_leap_conj2}
\end{align}
We may express the prior in \eqref{eq:lm_leap_conj2} hierarchically as
\begin{align}
    \bm{\beta}_k | \tau_k, \bm{c}_0 &\sim N_p\left( \tilde{\bm{\beta}}_{0k}, \tau_k^{-1} \left[ \bm{X}_0'\bm{C}_{0k} \bm{X}_0 + \bm{\Omega}_{0k} \right]^{-1} \right),
    %
    \notag \\
    \tau_k | \bm{c}_0 &\sim \text{Gamma}\left( \frac{n_{0k} + \delta_{0k}}{2}, \frac{\tilde{s}_{0k}^2}{2} \right),
    %
    \notag \\
    \bm{\gamma} | \bm{c}_0 &\sim \text{Dirichlet}\left( \bm{n}_0 + \bm{\alpha}_0 \right),
    %
    \notag \\
    \bm{c}_0 &\sim H_0,
    %
    \label{leap_conj_hierarchy}
\end{align}
where $\bm{n}_0 = (n_{01}, \ldots, n_{0K})'$ contains the sizes of each class, $H_0$ is the p.m.f. induced by the prior in \eqref{eq:lm_leap_conj2}. It is clear then that the LEAP is a proper prior if and only if $\bm{\Omega}_{0k}$ is a positive definite matrix, $\delta_{0k} > 0$ for every $k$, and $\xi_{0k} > 0$ for every $k$. Equivalently, the LEAP is proper provided the initial prior $\pi_0(\bm{\theta}, \bm{\gamma})$ is proper.

We may thus obtain the marginal prior for $\bm{\theta}_1$ as
\begin{align}
    \pi(\bm{\theta}_1 | D_0) &= \sum_{\bm{c}_0 \in \{ 1, \ldots, K \}^{n_0}} \phi_p\left( \bm{\beta}_1 \left| \tilde{\bm{\beta}}_{01}, \tau_1^{-1} \left[ \bm{X}_0'\bm{C}_{0k}\bm{X}_0 + \bm{\Omega}_{0k} \right]^{-1} \right) \right.
    f_{\Gamma}\left( \tau_1 \left| \frac{n_{0k} + \delta_{0k}}{2} , \frac{\tilde{s}_{0k}^2}{2} \right) \right.
    f_{H_0}(\bm{c}_0),
     %
     \label{eq:lm_leap_marginal_prior}
\end{align}
where $\phi_p(\cdot | \bm{m}, \bm{C})$ is the $p$-dimensional multivariate normal density function with mean $\bm{m}$ and covariance matrix $\bm{C}$, $f_{\Gamma}(\cdot | a, b)$ is the gamma density function with shape parameter $a$ and rate parameter $b$, and $f_{H_0}(\cdot)$ is the marginal mass function of $\bm{c}_0$ given by
\begin{align}
  f_{H_0}(\bm{c}_0) = 
  \frac
  {B(\bm{n}_0 + \bm{\alpha}_0) \prod_{k=1}^K 
     \left\lvert \bm{X}_0'\bm{C}_{0k} \bm{X}_0 + \bm{\Omega}_{0k} \right\rvert^{-1/2}
     \frac{\Gamma\left( \frac{n_{0k} + \delta_{0k}}{2} \right)}{\left( \tilde{s}_{0k}^2 / 2 \right)^{(n_{0k} + \alpha_{0k})/2}}}
  {\sum\limits_{c_0 \in \{ 1, \ldots, K \}^{n_0}}
    B(\bm{n}_0 + \bm{\alpha}_0) \prod_{k=1}^K 
     \left\lvert \bm{X}_0'\bm{C}_{0k} \bm{X}_0 + \bm{\Omega}_{0k} \right\rvert^{-1/2}
     \frac{\Gamma\left( \frac{n_{0k} + \delta_{0k}}{2} \right)}{\left( \tilde{s}_{0k}^2 / 2 \right)^{(n_{0k} + \alpha_{0k})/2}}
  }
  ,
  \label{eq:marginal_prior_c0}
\end{align}
where $B(\cdot)$ is the multivariate beta function and $\bm{n}_0 = (n_{01}, \ldots, n_{0K})$ is the sample size for each class. 

It follows from \eqref{eq:lm_leap_marginal_prior} that the LEAP is a mixture of $K^{n_0}$ normal-gamma densities, where the mixture weights are given by $f_{H_0}(\cdot)$. Although the denominator (and hence, the mixture weights) in \eqref{eq:marginal_prior_c0} is not in general computationally tractable, the form of the marginal prior for $\bm{c}_0$ is nevertheless helpful for understanding the LEAP. First, note that the kernel for $f_{H_0}(\cdot)$ is decreasing in $\tilde{s}_{0k}^2$. We may take $\bm{\Omega}_1$ to be noninformative, e.g., $\bm{\Omega}_1 = \text{diag}\{ 10^{-2}, \ldots, 10^{-2} \}$ so that the prior mean is not heavily influenced by the initial prior mean. Second, all else being equal, the marginal mass function for $\bm{c}_0$ prefers partitions of $\{ 1, \ldots, K\}^{n_0}$ that yield smaller sums of squared residuals within each class. Hence, for equal values of $n_{01} \in \{1, 2, \ldots, n_0-1\}$, the marginal prior density for $\bm{\theta}_1$ is more influenced by partitions that fit well to a linear regression model.

\subsection{The posterior density}
\label{sec:leap_lm_posterior}
In this section, we present analytic results concerning the posterior distribution under the LEAP. We also discuss some of the borrowing properties of the LEAP. For brevity, we summarize the analytic results. Full derivations are provided in Section 4 of the Supplementary Appendix.

Let $D = \{(y_i, \bm{x}_i), i = 1, \ldots, n\}$ denote the current data set of sample size $n$, where $y_i$ is a scalar response variable and $\bm{x}_i$ is a $p$-dimensional vector of covariates. We assume a linear regression model of the form 
$$
  \bm{y} = \bm{X}\bm{\beta}_1 + \bm{\epsilon},
$$
where 
$\bm{y} = (y_1, \ldots, y_n)'$ is an $n$-dimensional response vector,
$\bm{X} = (\bm{x}_1, \ldots, \bm{x}_n)'$ is a full column rank $n\times p$ design matrix, 
$\bm{\beta}_1$ is a $p$-dimensional vector of regression coefficients, 
and $\bm{\epsilon} = (\epsilon_1, \ldots, \epsilon_n)'$ is an $n$-dimensional error term. We assume $\bm{\epsilon}\sim N(0, \tau_1^{-1} I_n)$. We may write the likelihood of the current data $D$ as
\begin{align}
    L(\bm{\theta}_1 | D) \propto \tau_1^{-n/2} 
    \exp\left\{ -\frac{\tau_1}{2} \bm{y}'(I_n - \bm{M}) \bm{y} \right\}
    \exp\left\{ -\frac{\tau_1}{2} \left( \bm{\beta}_1 - \hat{\bm{\beta}}_1 \right)' \bm{X}'\bm{X}
    \left( \bm{\beta}_1 - \hat{\bm{\beta}}_1 \right)
    \right\},
    %
    \label{eq:curlike}
\end{align}
where $\bm{M} = \bm{X}\left( \bm{X}'\bm{X} \right)^{-1} \bm{X}'$ is the orthogonal projection operator onto the space spanned by the columns of $\bm{X}$ and where $\hat{\bm{\beta}} = \left( \bm{X}'\bm{X} \right)^{-1} \bm{X}'\bm{y}$ is the MLE of the current data set. 

For notational convenience, let $\bm{\Psi}_{0k} = \bm{X}_{0}'\bm{C}_{0k} \bm{X}_0 + \bm{\Omega}_{0k}$ denote the precision matrix for the $k^{th}$ class. 
Then the joint posterior under the joint LEAP is given by
\begin{align}
    p(\bm{\theta}, \bm{\gamma}, &\bm{c}_0 | D, D_0) 
      \propto L(\bm{\theta}_1 | D) \pi(\bm{\theta}, \bm{\gamma}, \bm{c}_0 | D_0),
      %
      \notag \\
      &\propto 
        \tau_1^{n/2} \exp\left\{ -\frac{\tau_1}{2} \bm{y}'\left( I_n - \bm{M} \right) \bm{y} \right\}
        \exp\left\{ -\frac{\tau_1}{2} \left( \bm{\beta}_1 - \hat{\bm{\beta}}_1 \right)' \bm{X}'\bm{X} \left( \bm{\beta}_1 - \hat{\bm{\beta}}_1 \right) \right\}
      \notag \\
      &\hspace{0.5cm} \times 
      \prod_{k=1}^K \gamma_k^{n_{0k} + \alpha_{0k} + p - 1}
      \tau_k^{(n_{0k} + \delta_{0k}) / 2 - 1}\exp\left\{ -\frac{\tau_k}{2} \tilde{s}_{0k}^2 \right\}
      \exp\left\{ -\frac{\tau_k}{2} 
        \left( \bm{\beta}_k - \tilde{\bm{\beta}}_{0k} \right)' \bm{\Psi}_{0k} \left( \bm{\beta}_k - \tilde{\bm{\beta}}_{0k} \right)
      \right\},
      %
      %
      \notag \\
      &\propto \left[ \prod_{k=1}^K \gamma_k^{n_{0k} + \alpha_{0k} - 1} \right]
      \left[
      \prod_{k=2}^K 
        \tau_k^{(n_{0k} + \delta_{0k} + p) / 2 - 1}\exp\left\{ -\frac{\tau_k}{2} \tilde{s}_{0k}^2 \right\}
       \exp\left\{ -\frac{\tau_k}{2} 
        \left( \bm{\beta}_k - \tilde{\bm{\beta}}_{0k} \right)' \bm{\Psi}_{0k} \left( \bm{\beta}_k - \tilde{\bm{\beta}}_{0k} \right)
      \right\}
      \right]
    \notag \\
    &\hspace{0.5cm} \times 
      \tau_1^{(n + n_{01} + \delta_{01} + p)/2}\exp\left\{ -\frac{\tau_1}{2}\left( \bm{y}'(I_n - \bm{M}) \bm{y} + \tilde{s}_{01}^2 \right) \right\}
      \notag \\ &\hspace{0.5cm} \times 
      \exp\left\{
        -\frac{\tau_1}{2}
        \left[
          \left(\bm{\beta}_1 - \hat{\bm{\beta}}_1 \right)' \bm{X}'\bm{X} \left( \bm{\beta}_1 - \hat{\bm{\beta}}_1 \right)
          + \left(\bm{\beta}_1 - \tilde{\bm{\beta}}_{01} \right)' \bm{\Psi}_{01} \left(\bm{\beta}_1 - \tilde{\bm{\beta}}_{01}\right)
        \right]
      \right\}.
      \label{eq:lm_post1}
\end{align}
Let 
$\bm{\Lambda}_1 = \left( \bm{X}'\bm{X} + \bm{\Psi}_{01} \right)^{-1} \bm{\Psi}_{01}$, 
$\tilde{\bm{\beta}}_1 = \bm{\Lambda}_1 \tilde{\bm{\beta}}_{01} + \left(\bm{I}_p - \bm{\Lambda}_1\right) \hat{\bm{\beta}}_1$, 
and 
$\tilde{s}_1^2 = \bm{y}'\left(\bm{I}_p - \bm{M}\right) \bm{y} + \left( \hat{\bm{\beta}}_1 - \tilde{\bm{\beta}}_{01} \right)'\left( \bm{\Lambda}_1 \bm{X}'\bm{X} \right) \left(\hat{\bm{\beta}}_1 - \tilde{\bm{\beta}}_{01}\right) + \xi_{01}$. 
It follows from a direct analogy of the derivation of the LEAP that the joint posterior density is given by
\begin{align}
    p(\bm{\theta}, &\bm{\gamma}, \bm{c}_0 | D, D_0) \propto 
      \left[ \prod_{k=1}^K \gamma_k^{n_{0k} + \alpha_{0k} - 1} \right]
      \notag \\
      &\hspace{0.5cm} \times
      \left[
      \prod_{k=2}^K 
        \tau_k^{(n_{0k} + \delta_{0k} + p) / 2 - 1}\exp\left\{ -\frac{\tau_k}{2} \tilde{s}_{0k}^2 \right\}
       \exp\left\{ -\frac{\tau_k}{2} 
        \left( \bm{\beta}_k - \tilde{\bm{\beta}}_{0k} \right)' \bm{\Psi}_{0k} \left( \bm{\beta}_k - \tilde{\bm{\beta}}_{0k} \right)
      \right\}
      \right]
      %
      \notag \\ &\hspace{0.5cm} \times
      \tau^{(n + n_{01} + \delta_{01} + p)/2}
      \exp\left\{
        -\frac{\tau_1\tilde{s}_1^2}{2}
      \right\} 
      \exp\left\{ 
         -\frac{\tau_1}{2} \left[ \left( \bm{\beta} - \tilde{\bm{\beta}}_1 \right)' \left( \bm{X}'\bm{X} + \bm{\Psi}_{01} \right) \left( \bm{\beta} - \tilde{\bm{\beta}}_1 \right) \right]
      \right\},
    %
    %
    \notag \\
    &\propto B(\bm{n}_0 + \bm{\alpha}_0) 
    \frac{ \Gamma\left( \frac{n + n_{01} + \delta_{01}}{2} \right) }{ (\tilde{s}_1^2 / 2)^{(n + n_{01} + \delta_{01})/2}} \left\lvert \bm{X}'\bm{X} + \bm{\Psi}_{01} \right\rvert^{-1/2} 
      \left[\prod_{k=2}^K \frac{\Gamma\left( \frac{n_{0k} + \delta_{0k}}{2} \right)}{{(\tilde{s}_{0k}^2/2)}^{(n_{0k} + \delta_{0k})/2}}
      \left\lvert \bm{\Psi}_{0k} \right\rvert^{-1/2}
      \right]
      \notag \\ &\hspace{0.5cm} \times
        f_{\text{Dir}}\left( \bm{\gamma} | \bm{n}_0 + \bm{\alpha}_0 \right)
      \notag \\ &\hspace{0.5cm} \times
      \left[ \prod_{k=2}^K
        f_{\Gamma}\left( \tau_k \left| \frac{n_{0k} + \delta_{0k}}{2}, \frac{\tilde{s}_{0k}^2}{2} \right) \right. 
        \phi_p\left( \bm{\beta}_k \left| \tilde{\bm{\beta}}_{0k}, \tau_k^{-1} \bm{\Psi}_{0k}^{-1} \right) \right.
      \right]
   \notag \\ &\hspace{0.5cm} \times
   f_{\Gamma}\left( \tau_1 \left| \frac{n + n_{01} + \delta_{01}}{2}, \frac{\tilde{s}_1^2}{2} \right) \right.
   \phi_p\left( \bm{\beta}_1 \left| \tilde{\bm{\beta}}_1, \tau_1^{-1} \left[ \bm{X}'\bm{X} + \bm{\Psi}_{01} \right]^{-1} \right) \right.
   .
   \label{eq:lm_leap_jointpost}
\end{align}
From \eqref{eq:lm_leap_jointpost}, the marginal posterior density of $\bm{\theta}_1$ may be expressed as
\begin{align}
    p(\bm{\theta}_1 | D, D_0)
       \propto 
       \sum_{\bm{c}_0 \in \{ 1, \ldots, K \}^{n_0}}
       f_{\Gamma}\left( \tau_1 \left| \frac{n + n_{01} + \delta_{01}}{2}, \frac{\tilde{s}_1^2}{2} \right) \right.
   \phi_p\left( \bm{\beta}_1 \left| \tilde{\bm{\beta}}_1, \tau_1^{-1} \right) \right.
   f_H(\bm{c}_0),
   %
   \label{eq:lm_posterior_theta1}
\end{align}
where
\begin{align}
  f_H(\bm{c}_0) = \frac{B(\bm{n}_0 + \bm{\alpha}_0) 
    \frac{ \Gamma\left( \frac{n + n_{01} + \delta_{01}}{2} \right) }{ (\tilde{s}_1^2 / 2)^{(n + n_{01} + \delta_{01})/2}} \left\lvert \bm{X}'\bm{X} + \bm{\Psi}_{01} \right\rvert^{-1/2} 
      \left[\prod_{k=2}^K \frac{\Gamma\left( \frac{n_{0k} + \delta_{0k}}{2} \right)}{{(\tilde{s}_{0k}^2/2)}^{(n_{0k} + \delta_{0k})/2}}
      \left\lvert \bm{\Psi}_{0k} \right\rvert^{-1/2}
      \right]}
      {
      \sum\limits_{\bm{c}_0 \in \{ 1, \ldots, K \}^{n_0}}
      B(\bm{n}_0 + \bm{\alpha}_0) 
    \frac{ \Gamma\left( \frac{n + n_{01} + \delta_{01}}{2} \right) }{ (\tilde{s}_1^2 / 2)^{(n + n_{01} + \delta_{01})/2}} \left\lvert \bm{X}'\bm{X} + \bm{\Psi}_{01} \right\rvert^{-1/2} 
      \left[\prod_{k=2}^K \frac{\Gamma\left( \frac{n_{0k} + \delta_{0k}}{2} \right)}{{(\tilde{s}_{0k}^2/2)}^{(n_{0k} + \delta_{0k})/2}}
      \left\lvert \bm{\Psi}_{0k} \right\rvert^{-1/2}
      \right]
      }  
      %
      \label{eq:lm_margpost_c0}
\end{align}
is the marginal posterior mass function of $\bm{c}_0$, which is not computationally tractable in general. It follows that the marginal posterior of $\bm{\theta}_1$ is a mixture of $K^{n_0}$ normal-gamma densities, where the weights are determined by the mass function $f_H(\cdot)$.

Although it is computationally intractable, the form of $f_H(\cdot)$ provides insights into the borrowing properties of the LEAP. First, note that the mass function is decreasing in $\tilde{s}_1^2$. By an argument in completing the square, it can be shown that
$$\tilde{s}_1^2 = \bm{y}'\bm{y} + \tilde{\bm{\beta}}_{01}'\bm{\Psi}_{01}\tilde{\bm{\beta}}_{01}
- \left( \hat{\bm{\beta}}_1 - \tilde{\bm{\beta}}_{01} \right)'\left( \bm{X}'\bm{X} + \bm{\Psi}_{01} \right) \left( \hat{\bm{\beta}}_1 - \tilde{\bm{\beta}}_{01} \right),
$$
where $\left( \hat{\bm{\beta}}_1 - \tilde{\bm{\beta}}_{01} \right)'\left( \bm{X}'\bm{X} + \bm{\Psi}_{01} \right) \left( \hat{\bm{\beta}}_1 - \tilde{\bm{\beta}}_{01} \right) \ge 0 $ since $\bm{X}'\bm{X}$ and $\bm{\Psi}_{01}$ are positive definite matrices. Let $\lambda_{\text{min}} > 0$ be the minimal eigenvalue of $\left( \bm{X}'\bm{X} + \bm{\Psi}_{01} \right)$ and let $\lVert \cdot \rVert$ denote the Euclidean norm. Then, letting $\bm{\Omega}_{01} \to \bm{0}_{p\times p}$ so that $\bm{\Psi}_{01} \to \bm{X}_{01}'\bm{X}_{01}$, since
$$
\left( \hat{\bm{\beta}}_1 - \tilde{\bm{\beta}}_{01} \right)'
  \left( \bm{X}'\bm{X} + \bm{\Psi}_{01} \right)
  \left( \hat{\bm{\beta}}_1 - \tilde{\bm{\beta}}_{01} \right)
\ge 
\lambda_{\text{min}} \left\lVert \hat{\bm{\beta}}_1 - \tilde{\bm{\beta}}_{01} \right\rVert^2,
$$
the posterior mass of $\bm{c}_0$, all else being equal, is larger for partitions of $\{ 1, \ldots, K \}^{n_0}$ yielding
$\hat{\bm{\beta}}_{01} \approx \hat{\bm{\beta}}_1$, i.e., those partitions such that individuals in the first class result in an MLE that is similar to the current data set. This illustrates the ``adaptive borrowing'' of the LEAP, namely, the LEAP will borrow more from partitions of the historical data that are similar to the current data.

\subsection{Posterior under an improper prior}
We now show that, under a special class of improper initial priors, the LEAP yields a proper posterior. The improper prior is attractive as, under the fully exchangeable setting, it asymptotically mimics a frequentist analysis of pooling the data together.

Let the initial prior be given as
$$
    \pi_0(\bm{\theta}, \bm{\gamma}) \propto \tau_1^{\delta_{01}/2 - 1} \exp\left\{ -\frac{\tau_1\xi_{01}}{2} \right\}
    \left[\prod_{k=1}^K \gamma_k^{\alpha_{0k} - 1} \right]
    \prod_{k=2}^K 
    \phi_p(\bm{\beta}_k | \bm{\mu}_{0k}, \tau_k^{-1} \bm{\Omega}_{0k}^{-1})
    f_{\Gamma}\left( \tau \left| \frac{\delta_{0k}}{2}, \frac{\xi_{0k}}{2} \right) \right.
    .
$$
This prior is equivalent to taking $\bm{\Omega}_{01} \to \bm{0}_{p \times p}$ in the initial prior in \eqref{lm_initprior}. Clearly, this prior is not proper. However, the resulting posterior distribution is proper. As $\bm{\Omega}_{01} \to 0$, $\bm{\Psi}_{01} \to \bm{X}_0'\bm{C}_{01} \bm{X}_0$ and hence $\bm{\Lambda}_1 \to \left( \bm{X}'\bm{X} + \bm{X}_0'\bm{C}_{01} \bm{X}_0 \right)^{-1} \bm{X}_0'\bm{C}_{01} \bm{X}_0$. Also, $\bm{X}_{01}'\bm{C}_{01}\bm{X}_{01} = \sum_{i=1}^{n_0} c_{0i} \bm{x}_{0i} \bm{x}_{0i}' = \sum_{\{c_{0i} = 1\}} \bm{x}_{0i}\bm{x}_{0i}'$ is simply the precision matrix of the individuals classified to be exchangeable.

Hence, for a particular partition, the posterior mean is simply a weighted average of the MLE of the current data set and that for the partition of the historical data set (if it exists). That is, the mean for a component in the posterior in \eqref{eq:lm_posterior_theta1} is equal to the MLE of the current data set shrunken towards the MLE of the partition, with the mixture weight depending on the size of the partition and how similar the MLEs of the current and historical data sets are.

\section{Full data analysis results}

Patient characeristics of the ESTEEM I and ESTEEM II trials are provided in Table~\ref{tab:baseline}.

\newcolumntype{x}[1]{>{\centering\arraybackslash\hspace{0pt}}p{#1}}
\begin{small}
\begin{longtable}[h]{p{5.5cm}cx{2cm}cx{2cm}x{2.25cm}}
\toprule
 & \multicolumn{2}{c}{\textbf{ESTEEM I}} & \multicolumn{3}{c}{\textbf{ESTEEM II}} \\ 
\cmidrule(lr){2-3} \cmidrule(lr){4-6}
\textbf{Characteristic} & \textbf{N} & \textbf{Placebo} & \textbf{N} & \textbf{Placebo} & \textbf{30 mg BID}\\ 
\midrule
Age & 282 & 46.5 $\pm$ 12.7 & 411 & 45.7 $\pm$ 13.4 & 45.3 $\pm$ 13.1 \\ 
Smoker Category & 282 &  & 411 &  &  \\ 
    Current user &  & 92 (33\%) &  & 61 (45\%) & 101 (37\%) \\ 
    Not a current user &  & 190 (67\%) &  & 76 (55\%) & 173 (63\%) \\ 
Prior use of Systemic Therapies & 282 &  & 411 &  &  \\ 
    N &  & 132 (47\%) &  & 64 (47\%) & 117 (43\%) \\ 
    Y &  & 150 (53\%) &  & 73 (53\%) & 157 (57\%) \\ 
Baseline PASI Score & 282 & 19.4 (7.4) & 411 & 20.0 (8.0) & 18.9 (7.1) \\ 
\bottomrule
%
\caption{\label{tab:baseline} Patient characteristics and of ESTEEM I and ESTEEM II. Continuous covariates show Mean $\pm$ SD; binary covariates show $N$ ($\%$).
}
\end{longtable}
\end{small}
\vspace{2cm}

Table~\ref{tab:full_posterior_results} provides full posterior results for the ESTEEM 2 trial under the LEAPs, PBNPP, and reference prior, treating the ESTEEM 1 trial as historical data. The model is defined as follows
$$
  E(y | z, \bm{x}) = \beta_0 + \sum_{j=1}^6 \beta_j x_j,
$$
where $x_1$ is a treatment indicator, $x_2$ is centered and scaled age, $x_3 = x_2^2$, $x_4$ indicates that the subject does not currently smoke, $x_5$ is an indicator of whether the subject had received a prior systemic therapy, and $x_6$ is centered and scaled baseline PASI score.                    

    \begin{table}[h]
        \centering
        \resizebox{\textwidth}{!}{
            \begin{tabular}{lcccccccc}
            \toprule
             & \multicolumn{8}{c}{Prior} \\ \cmidrule(lr){2-9}
             & \multicolumn{2}{c}{LEAP $(K=2)$} & \multicolumn{2}{c}{LEAP $(K=3)$} & \multicolumn{2}{c}{PBNPP} & \multicolumn{2}{c}{Reference} \\ \cmidrule(lr){2-3}\cmidrule(lr){4-5}\cmidrule(lr){6-7}\cmidrule(lr){8-9}
            Parameter  & Post. Mean & Post. SD & Post. Mean & Post. SD & Post. Mean & Post. SD & Post. Mean & \multicolumn{1}{c}{Post. SD} \\ 
            \midrule
            $\beta_{0}$  & $-21.23$ & $3.915$ & $-21.37$ & $3.826$ & $-21.63$ & $3.755$ & $-17.97$ & $3.946$ \\
            $\beta_{1}$  & $-31.39$ & $3.353$ & $-31.49$ & $3.273$ & $-31.86$ & $3.085$ & $-32.12$ & $3.448$ \\
            $\beta_{2}$  & $\phantom{0}-1.06$ & $1.742$ & $\phantom{0}-1.13$ & $1.720$ & $\phantom{0}-0.73$ & $1.611$ & $\phantom{0}-0.48$ & $1.752$ \\
            $\beta_{3}$  & $\phantom{0}-1.98$ & $1.493$ & $\phantom{0}-2.08$ & $1.660$ & $\phantom{0}-1.65$ & $1.390$ & $\phantom{0}-2.29$ & $1.539$ \\
            $\beta_{4}$  & $\phantom{0}-2.23$ & $3.369$ & $\phantom{0}-2.04$ & $3.372$ & $\phantom{0}-2.23$ & $3.334$ & $\phantom{0}-4.32$ & $3.404$ \\
            $\beta_{5}$  & $\phantom{-}11.54$ & $3.245$ & $\phantom{-}11.93$ & $3.173$ & $\phantom{-}12.50$ & $3.216$ & $\phantom{-}10.11$ & $3.324$ \\
            $\beta_{6}$  & $\phantom{0}\phantom{-}1.48$ & $1.628$ & $\phantom{0}\phantom{-}1.48$ & $1.585$ & $\phantom{0}\phantom{-}1.38$ & $1.563$ & $\phantom{0}\phantom{-}1.29$ & $1.724$ \\
            $\sigma_1$  & $\phantom{-}35.44$ & $1.145$ & $\phantom{-}35.41$ & $1.112$ & $\phantom{-}35.27$ & $1.150$ & $\phantom{-}35.71$ & $1.237$ \\
            %
            $\gamma_1$  & $\phantom{0}\phantom{-}0.40$ & $0.088$ & $\phantom{0}\phantom{-}0.41$ & $0.083$ & $\phantom{000}\phantom{-}$ & $\phantom{000}$ & $\phantom{000}\phantom{-}$ & $\phantom{000}$ \\
            %
            $a_0$  & $\phantom{000}\phantom{-}$ & $\phantom{000}$ & $\phantom{000}\phantom{-}$ & $\phantom{000}$ & $\phantom{0}\phantom{-}0.34$ & $0.095$ & $\phantom{000}\phantom{-}$ & $\phantom{000}$ \\
            \bottomrule 
            \end{tabular}
        }
        \caption{Posterior means and standard deviations from the posterior densities of the ESTEEM-2 trial. The parameter $\sigma_1 = \tau_1^{-1/2}$ is the standard deviation of the outcome. Post. Mean = posterior mean; Post. SD = posterior standard deviation.}
        \label{tab:full_posterior_results}
    \end{table}

\begin{table}[ht]
    \centering
    \begin{tabular}{lcc}
        \toprule
        Parameter  & Post. Mean & \multicolumn{1}{c}{Post. SD} \\ 
        \midrule
        $\Delta$  & $-28.07$ & $2.69$ \\
        $\mu^{(0)}_{1}$  & $-13.74$ & $4.27$ \\
        $\mu^{(0)}_{2}$  & $\phantom{0}-8.05$ & $4.67$ \\
        $\mu^{(0)}_{3}$  & $-17.75$ & $4.35$ \\
        $\mu^{(0)}_{4}$  & $-12.74$ & $4.26$ \\
        $\mu^{(0)}_{5}$  & $-16.63$ & $3.67$ \\
        $\mu^{(1)}_{1}$  & $-37.02$ & $4.35$ \\
        $\mu^{(1)}_{2}$  & $-36.25$ & $4.85$ \\
        $\mu^{(1)}_{3}$  & $-41.76$ & $4.85$ \\
        $\mu^{(1)}_{4}$  & $-41.23$ & $3.88$ \\
        $\mu^{(1)}_{5}$  & $-52.98$ & $3.87$ \\
        $\sigma_{1}$  & $\phantom{-}33.84$ & $2.21$ \\
        $\sigma_{2}$  & $\phantom{-}38.58$ & $2.45$ \\
        $\sigma_{3}$  & $\phantom{-}35.13$ & $2.40$ \\
        $\sigma_{4}$  & $\phantom{-}34.80$ & $2.08$ \\
        $\sigma_{5}$  & $\phantom{-}29.14$ & $1.98$ \\
        \bottomrule 
    \end{tabular}
    \caption{Posterior summary statistics for the ESTEEM II trial under the PSIPP. $\Delta$ = treatment effect; $\mu_k^{(j)}$ = mean of $j^{th}$ arm in $k^{th}$ stratum; $\sigma_k$ = standard deviation of $k^{th}$ stratum.}
    \label{tab:my_label}
\end{table}
